\let\LambdaOLD\Lambda
\renewcommand{\Lambda}{{\bm\LambdaOLD}}
\let\Psiold\Psi
\renewcommand{\Psi}{{\bm\Psiold}}
\let\Sigmaold\Sigma
\renewcommand{\Sigma}{{\bm\Sigmaold}}
\newcommand{\Tr}{\mathrm{Tr}~}
\newcommand{\normal}{{\mathcal N}}
\newtcolorbox{mybox}[1][]{
    tikznode boxed title,
    enhanced,
    arc=0mm,
    interior style={white},
    attach boxed title to top center= {yshift=-\tcboxedtitleheight/2},
    fonttitle=\bfseries,
    colbacktitle=white,coltitle=black,
    boxed title style={size=normal,colframe=white,boxrule=0pt},
    #1}
\newtheorem{theorem}{Theorem}
\newtheorem{lemma}[theorem]{Lemma}
\newcommand{\citep}{\cite}
\newcommand{\citet}{\cite}
\begin{document}
\title{A Matrix--free Likelihood Method for Exploratory Factor Analysis of High-dimensional Gaussian Data}
\author{{Fan Dai, Somak Dutta and Ranjan Maitra}%<-this % stops a space 
  \IEEEcompsocitemizethanks{\IEEEcompsocthanksitem The authors are
    with Iowa State University. Email: \{fd43,somakd,maitra\}@iastate.edu.}
  \IEEEcompsocitemizethanks{This research was supported in part by the United States Department of Agriculture (USDA) National
Institute of Food and Agriculture (NIFA) Hatch project IOW03617. 
       The research of the third
      author was also supported in part by the National
      Institute of Biomedical Imaging and Bioengineering (NIBIB) 
of the National Institutes of Health (NIH) under Grant R21EB016212.
The content of this paper is however solely the responsibility of the
authors and does not represent the official views of the NIBIB, the
NIH, the NIFA or the USDA.}
\IEEEcompsocitemizethanks{A poster based on this research won the
  first author an award in the Second Midwest Statistical Machine Learning
  Colloquium in 2019. }
}
% \thanks{\copyright 201x IEEE. Personal use of this material is
%permitted. However, permission to use this material for any other
%purposes must be obtained from the IEEE by sending a request to
%pubs-permissions@ieee.org.} \thanks{Manuscript received xxxx xx,201x;
%revised xxxxxxxx xx, 201x.   First published xxxxxxxx x, xxxx,
%current version published   yyyyyyyy y, yyyy} \thanks{}  \thanks{Digital Object Identifier}
\markboth{%IEEE Transactions on Knowledge and Data
          %Engineering,~Vol.~?, No.~?, January~201?}
}%
{Dai \MakeLowercase{\textit{et al.}}:Factor Analysis of Gaussian Data}
\clearpage
\setcounter{page}{1}

%--------------------------------------- Abstract --------------------------

\IEEEcompsoctitleabstractindextext{%

  \begin{abstract}

This paper proposes a novel profile likelihood method for estimating the covariance parameters in exploratory factor analysis of high-dimensional Gaussian datasets with fewer observations than number of variables. An implicitly restarted Lanczos algorithm and a limited-memory quasi-Newton method are implemented to develop a matrix-free framework for likelihood maximization. % that uses partial singular value decomposition for datasets with fewer observations than number of variables. %We perform simulation study with both the randomly generated models and the data-driven models.% 
Simulation results show that our method is substantially faster than the expectation-maximization solution without sacrificing accuracy. Our method is applied to fit factor models on data from suicide attempters, suicide ideators and a control group.

\end{abstract}

\begin{IEEEkeywords}
 fMRI, Implicitly restarted Lanczos algorithm, L-BFGS-B, Profile likelihood
\end{IEEEkeywords}}
\maketitle

\IEEEdisplaynotcompsoctitleabstractindextext

\section{Introduction} %1 page
\label{sec:intro}
Factor analysis \citep{anderson2003introduction} is a multivariate
statistical technique that characterizes dependence among variables
using a small number of latent factors. Suppose that we have a
sample ${\bf Y}_1,{\bf Y}_2,\ldots,{\bf Y}_n$ from the $p$-variate Gaussian distribution
$\normal_p(\bm\mu,\bm\Sigma)$ with mean vector $\bm\mu$ and a
covariance matrix $\bm\Sigma$. We assume that $\bm{\Sigma} =
\bm{\Lambda}\bm{\Lambda}^{\top} + \bm{\Psi},$ where $\bm{\Lambda}$ is
a $p\times q$ matrix  of rank $q$ that describes the amount of
variance shared among the $p$ coordinates and $\bm{\Psi}$ is a
diagonal matrix with positive diagonal entries representing the unique
variance specific to each coordinate. Factor analysis of Gaussian data
for $p<n$ was first formalized by \citet{lawley1940} with efficient maximum
likelihood (ML) estimation methods proposed
by~\citet{joreskog,maxwell,mardia,anderson2003introduction} and
others. These methods however do not apply to datasets with $p > n$
that occur in applications such as the processing of microarray data \citep{sundberg2016exploratory}, sequencing data
\citep{leek2007,leek2014,Buettner2017}, analyzing the transcription factor activity profiles of gene regulatory networks using massive gene expression datasets \citep{genefa}, portfolio analysis in stock return \citep{arfa} and others \citep{trendafilov2011exploratory}. Necessary and sufficient conditions for the existence of MLE when $p>n$ have been obtained by \cite{robertson2007maximum}. In such cases, the available computer memory may be inadequate to store the sample covariance matrix $\bf S$ or to make multiple copies of the dataset needed during the computation. 

The expectation-maximization (EM) approach of \citet{rubin} can be
applied to datasets with $p>n$ but is computationally slow. So,
here we develop a profile likelihood method for high-dimensional
Gaussian data. Our method allows us to compute the gradient of the
profile likelihood function at negligible additional computational
cost and to check first-order optimality, guaranteeing high
accuracy. We develop a fast sophisticated computational
framework called FAD (Factor Analysis of Data) to compute ML estimates of $\Lambda$ and $\Psi$. Our framework is implemented in an R~\citep{R} package called {\sf fad}.

The remainder of this paper is organized as follows. Section
\ref{sec:meth} describes the factor model for Gaussian data and an ML
solution using the EM algorithm, and then proposes the profile
likelihood and FAD. The performance of FAD relative to EM is evaluated
in Section \ref{sec:sim}. Section \ref{sec:app} applies our
methodology on a functional magnetic resonance imaging (fMRI) dataset
related to suicidal behavior \citep{just}. Section \ref{sec:dis}
concludes with some discussion. A online supplement, with sections,
tables and figures referenced with the prefix ``S'', is available. 

%The algorithms described in this article are implemented in an R-package named \textit{fad} and it is available at https://github.com/somakd/fad and soon will be available in CRAN.

\section{Methodology}\label{sec:meth}
% In this section we describe the factor model for Gaussian data and the likelihood based estimating methods.

\subsection{Background and Preliminaries}
Let $\bf Y$ be the $n\times p$ data matrix with ${\bf Y}_i$ as its $i$th row. Then, in the setup of
% For
% $\mathbf{Y}_1,\mathbf{Y}_2,\ldots,\mathbf{Y}_n\stackrel{i.i.d.}{\sim}\normal_p(\bm{\mu}, \bm{\Sigma})$ with $\bm{\Sigma} = \bm{\Lambda}\bm{\Lambda}^{\top} + \bm{\Psi}$ as described in
Section \ref{sec:intro}, the ML method profiles out $\bm\mu$ using the sample mean vector and then maximizes the log-likelihood,
\begin{equation}\label{eqn:loglikelihood}
\ell(\Lambda,\Psi) = -\frac n2\{p\log(2\pi) + \log\det \Sigma +\Tr \Sigma^{-1}\mathbf{S}\}
\end{equation}
where $\bar{\mathbf{Y}} = \mathbf{Y}^\top{\bf 1}/n$, $\mathbf{S} =
(\mathbf{Y} - \mathbf{1}\bar{\mathbf{Y}}^\top)^\top(\mathbf{Y} -
\mathbf{1}\bar{\mathbf{Y}}^\top)/n$, where $\bf 1$ is the vector of
1s. The matrix $\bf S$ is almost surely singular and has rank $n$ when $p > n$.
The factor model \eqref{eqn:loglikelihood} is not identifiable because
the matrices $\Lambda$ and $\Lambda \mathbf{Q}$ give rise to the same
likelihood for any orthogonal matrix $\mathbf{Q}.$ So, additional
constraints (see \citep{anderson2003introduction, mardia} for more
details) are imposed.

\subsubsection{EM Algorithms for parameter estimation}
The EM algorithm \citep{rubin,mclachlan} exploits the structure of the
factor covariance matrix by assuming $q$-variate standard normal
latent factors %$\mathbf{Z}_1,\mathbf{Z}_2,\ldots,\mathbf{Z}_n\stackrel{i.i.d.}{\sim}\normal_q(\mathbf{0},\mathbf{I}_q)$,
%with ${\bf I}_q$ where for $q<p$
and writing the factor model as
%\begin{equation*}\label{eqn:latentstructure}
$\mathbf{Y}_i = {\bm\mu} + \Lambda \mathbf{Z}_i + \bm\epsilon_i $
%\end{equation*}
where $\bm\epsilon_i$'s are i.i.d
$\normal_p(\mathbf{0},\mathbf{\Psi})$ and $\mathbf{Z}_i$'s are
independent of $\bm\epsilon_i$'s. The EM algorithm is easily
developed, with analytical expressions for both the expectation
(E-step) and maximization (M-step) steps that can be speedily computed
(see Section \ref{sec:supp-em}).  

Although EM algorithms are guaranteed to increase the likelihood at
each iteration and converge to a local maximum, they are well-known
for their slow convergence. Further, these iterations run in a
$(pq+p)$-dimensional space that can be slow for very large
$p$. Accelerated variants \citep{liu,roland} show superior performance
%over the vanilla EM algorithm
in low-dimensional problems but come   with additional computational
overhead that dominates the gain in rate of convergence in  high
dimensions. EM algorithms also compromise on numerical accuracy by not
checking for first-order optimality to enhance speed. So, we next
develop a fast and accurate method for exploratory factor analysis (EFA) that is applicable in high dimensions.

\subsection{Profile likelihood for parameter estimation}\label{sec:proflik}
We start with the common and computationally useful identifiability
restriction on $\Lambda$ that constrains
$\bm\Gamma = \Lambda^\top\Psi^{-1}\Lambda$ to be diagonal with decreasing diagonal entries. This scale-invariant constraint is completely determined except for changes in sign in the columns of $\Lambda$. Under this constraint, $\Lambda$ can be profiled out for a given $\Psi$ as described in the following

\begin{lemma}\label{lemma:profileout}
Suppose that $\Psi$ is a given p.d. diagonal matrix. Suppose that the $q$ largest singular values of $\mathbf{W} = n^{-1/2}(\mathbf{Y} - \mathbf{1}\bar{\mathbf{Y}}^\top)\Psi^{-1/2}$ are $\sqrt{\theta_1} \geq \sqrt{\theta_2} \geq \cdots \geq \sqrt{\theta_q}$ and the corresponding $p$-dimensional right-singular vectors are the columns of $\mathbf{V}_q.$ Then the function $\Lambda \mapsto \ell(\Lambda,\Psi)$ is maximized at ${\hat\Lambda} = \Psi^{1/2}\mathbf{V}_q\bm\Delta,$ where $\bm\Delta$ is a $q\times q$ diagonal matrix with $i$th diagonal entry as $[\max(\theta_i-1,0)]^{1/2}.$ The profile log-likelihood equals, 
\begin{equation}\label{eqn:profilelikelihood}
\ell_p(\Psi) = c - \frac n2 \left\{\log\det\Psi + \Tr \Psi^{-1}\mathbf{S} + \sum_{i=1}^q(\log\theta_i - \theta_i + 1)\right\}
\end{equation}
where $c$ is a constant that depends only on $\mathbf{Y},$ $n,$ $p$ and $q$ but not on $\Psi.$ Furthermore, the gradient of $\ell_p(\Psi)$ is given by:
$\nabla\ell_p(\Psi) = -\frac n2~\mathrm{diag}({\hat{\Lambda}}{\hat{\Lambda}}^\top + \Psi - \mathbf{S}).$
\end{lemma}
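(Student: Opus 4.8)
The plan is to whiten the data by $\Psi^{-1/2}$ and turn the inner maximization over $\Lambda$ into an eigenvalue optimization. Set $\mathbf{L}=\Psi^{-1/2}\Lambda$, so that $\Sigma=\Psi^{1/2}(\mathbf{I}_p+\mathbf{L}\mathbf{L}^{\top})\Psi^{1/2}$; then $\log\det\Sigma=\log\det\Psi+\log\det(\mathbf{I}_p+\mathbf{L}\mathbf{L}^{\top})$ and, by the cyclic property of the trace, $\Tr\Sigma^{-1}\mathbf{S}=\Tr\big[(\mathbf{I}_p+\mathbf{L}\mathbf{L}^{\top})^{-1}\mathbf{M}\big]$, where $\mathbf{M}=\Psi^{-1/2}\mathbf{S}\Psi^{-1/2}=\mathbf{W}^{\top}\mathbf{W}$ has the $\theta_i$ as its ordered eigenvalues and an orthonormal eigenbasis whose first $q$ columns are those of $\mathbf{V}_q$. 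Hence maximizing $\Lambda\mapsto\ell(\Lambda,\Psi)$ is equivalent to minimizing $g(\mathbf{L})=\log\det(\mathbf{I}_p+\mathbf{L}\mathbf{L}^{\top})+\Tr[(\mathbf{I}_p+\mathbf{L}\mathbf{L}^{\top})^{-1}\mathbf{M}]$ over $p\times q$ matrices $\mathbf{L}$.

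Next I would set $\mathbf{B}=\mathbf{I}_p+\mathbf{L}\mathbf{L}^{\top}$, noting that $\mathbf{B}$ ranges exactly over the symmetric matrices with $\mathbf{B}\succeq\mathbf{I}_p$ having at most $q$ eigenvalues exceeding $1$. Since $\log\det\mathbf{B}$ depends only on the spectrum of $\mathbf{B}$, and von Neumann's trace inequality gives $\Tr\mathbf{B}^{-1}\mathbf{M}\ge\sum_i\theta_i/\beta_i$ whenever $\beta_1\ge\cdots\ge\beta_p$ are the ordered eigenvalues of $\mathbf{B}$, we obtain the separable lower bound $g(\mathbf{L})\ge\sum_{i=1}^{q}\big(\log\beta_i+\theta_i/\beta_i\big)+\sum_{i>q}\theta_i$, minimized coordinatewise at $\beta_i=\max(\theta_i,1)$, and this bound is attained by $\mathbf{B}=\mathbf{V}\,\mathrm{diag}(\max(\theta_1,1),\dots,\max(\theta_q,1),1,\dots,1)\mathbf{V}^{\top}$. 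Unwinding, $\mathbf{L}=\mathbf{V}_q\bm\Delta$, i.e. $\hat\Lambda=\Psi^{1/2}\mathbf{V}_q\bm\Delta$; then $\Lambda^{\top}\Psi^{-1}\Lambda=\bm\Delta^{2}$ is diagonal with nonincreasing entries, so the identifiability constraint holds automatically, and substituting $\beta_i=\max(\theta_i,1)$ back into $g$ and absorbing the $\Psi$-free terms into $c$ produces \eqref{eqn:profilelikelihood} (in the regime of interest $\theta_q\ge1$; any term with $\theta_i<1$ should be read as $0$). I expect this to be the crux: the feasible set of rank-$\le q$ perturbations of $\mathbf{I}_p$ is non-convex, so one cannot argue by stationarity and must instead combine the trace inequality with an explicit attaining minimizer; the remaining scalar optimization and back-substitution are routine.

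For the gradient I would invoke the envelope theorem. Because $\ell(\cdot,\Psi)$ is invariant under $\Lambda\mapsto\Lambda\mathbf{Q}$ for orthogonal $\mathbf{Q}$, the constrained maximizer $\hat\Lambda$ is also a stationary point of the unconstrained map $\Lambda\mapsto\ell(\Lambda,\Psi)$, i.e. the normal equation $\hat\Sigma^{-1}(\hat\Sigma-\mathbf{S})\hat\Sigma^{-1}\hat\Lambda=\mathbf{0}$ holds with $\hat\Sigma=\hat\Lambda\hat\Lambda^{\top}+\Psi$. Thus $\nabla\ell_p(\Psi)$ equals the partial $\Psi$-gradient of $\ell(\hat\Lambda,\Psi)$, whose $j$th entry is $-\frac{n}{2}\big(\hat\Sigma^{-1}(\hat\Sigma-\mathbf{S})\hat\Sigma^{-1}\big)_{jj}$ because $\partial_{\Psi_{jj}}\log\det\Sigma=(\Sigma^{-1})_{jj}$ and $\partial_{\Psi_{jj}}\Tr\Sigma^{-1}\mathbf{S}=-(\Sigma^{-1}\mathbf{S}\Sigma^{-1})_{jj}$. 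Clearing one factor of $\hat\Sigma^{-1}$ from the normal equation, right-multiplying by $\hat\Lambda^{\top}$, and using $\hat\Sigma^{-1}\hat\Lambda\hat\Lambda^{\top}=\mathbf{I}_p-\hat\Sigma^{-1}\Psi$ gives $(\hat\Sigma-\mathbf{S})\hat\Sigma^{-1}=(\hat\Sigma-\mathbf{S})\Psi^{-1}$, whence $\hat\Sigma^{-1}(\hat\Sigma-\mathbf{S})\hat\Sigma^{-1}=\Psi^{-1}(\hat\Sigma-\mathbf{S})\Psi^{-1}$; taking diagonals and passing to the reciprocal-variance parametrization of $\Psi$ used in the implementation (whose Jacobian cancels the factors $\Psi_{jj}^{-2}$) then yields $\nabla\ell_p(\Psi)=-\frac{n}{2}\,\mathrm{diag}(\hat\Lambda\hat\Lambda^{\top}+\Psi-\mathbf{S})$. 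The only routine items to check here are the differentiability of the selection $\Psi\mapsto\hat\Lambda$ needed for the envelope step (valid away from spectral coincidences of $\mathbf{M}$) and the two matrix-calculus identities above.
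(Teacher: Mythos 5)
Your maximization argument is correct but takes a genuinely different route from the paper. The paper's supplement works from the score equations $\Lambda(\mathbf{I}_q+\Lambda^\top\Psi^{-1}\Lambda)=\mathbf{S}\Psi^{-1}\Lambda$ and $\Psi=\mathrm{diag}(\mathbf{S}-\Lambda\Lambda^\top)$, reads off $\hat\Lambda=\Psi^{1/2}\mathbf{V}_q(\mathbf{D}_q-\mathbf{I}_q)^{1/2}$ as the stationary point associated with the top $q$ eigenvalues of $\Psi^{-1/2}\mathbf{S}\Psi^{-1/2}$ (assuming $\mathbf{D}_q>\mathbf{I}_q$), and then substitutes back using the determinant and inverse identities for $\mathbf{V}_q\bm\Delta^2\mathbf{V}_q^\top+\mathbf{I}_p$ to obtain \eqref{eqn:profilelikelihood}. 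Your reduction to $\mathbf{B}=\mathbf{I}_p+\mathbf{L}\mathbf{L}^\top$ ranging over matrices $\succeq\mathbf{I}_p$ with at most $q$ eigenvalues exceeding one, combined with von Neumann's trace inequality and the coordinatewise minimization of $\log\beta+\theta/\beta$ at $\beta=\max(\theta,1)$, gives \emph{global} optimality of $\hat\Lambda$ directly (and handles $\theta_i\le1$ cleanly), which is stronger than the stationarity bookkeeping in the supplement; the price is the explicit attainment step, which you supply. This part is sound.

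The gradient step, however, has a genuine wrinkle at its last line. Your envelope-theorem computation is correct through the identity $\hat\Sigma^{-1}(\hat\Sigma-\mathbf{S})\hat\Sigma^{-1}=\Psi^{-1}(\hat\Sigma-\mathbf{S})\Psi^{-1}$, so what you have actually established is $\partial\ell_p/\partial\psi_{jj}=-\frac n2\,\psi_{jj}^{-2}\,(\hat\Lambda\hat\Lambda^\top+\Psi-\mathbf{S})_{jj}$. The appeal to a ``reciprocal-variance parametrization'' does not convert this into the stated display: the Jacobian of $\psi\mapsto1/\psi$ is $-\psi^2$, so the factor $\psi_{jj}^{-2}$ cancels but the sign flips, yielding $+\frac n2\,\mathrm{diag}(\hat\Lambda\hat\Lambda^\top+\Psi-\mathbf{S})$ in that parametrization; and in any case the lemma presents $\nabla\ell_p(\Psi)$ as a gradient in $\Psi$ itself, not in $\Psi^{-1}$. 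The honest conclusion of your argument is $\nabla\ell_p(\Psi)=-\frac n2\,\mathrm{diag}\bigl(\Psi^{-1}(\hat\Lambda\hat\Lambda^\top+\Psi-\mathbf{S})\Psi^{-1}\bigr)$, which agrees with the lemma's expression only up to the positive diagonal factor $\mathrm{diag}(\Psi^{-2})$ (hence the same zero set and componentwise signs, which is all the first-order optimality check uses); you should either state it in that form and remark on the proportionality, or make the intended reparametrization explicit and fix the sign. Note for comparison that the paper's own supplement derives only the profile likelihood and never proves the gradient claim, so apart from this final hand-wave your treatment of the gradient is the more complete one; the differentiability caveat you flag (no ties among $\theta_q,\theta_{q+1}$ and $\theta_i\neq1$) is the right one to record for the envelope step.
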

\begin{proof}
See Section \ref{sec:supp-lemma}.
\end{proof}

The profile log-likelihood $\ell_p(\Psi)$ in
\eqref{eqn:profilelikelihood} depends on $\mathbf{Y}$ only through the
$q$ largest singular values of $\mathbf{W}.$ So, in order to compute
$\ell_p(\Psi)$ and $\nabla\ell_p(\Psi)$ we need to only compute the
$q$ largest singular values of $\mathbf{W}$ and the right singular vectors. For $q<<\min(n,p)$, as is usually the case, these largest singular values and singular vectors can be computed very fast using Lanczos algorithm.

Further constraints on $\mathbf{\Psi}$ (e.g. $\mathbf{\Psi}=\sigma^2\mathbf{I}_p,$ $\sigma^2 >0$) can be easily incorporated. Also, $\nabla\ell_p(\Psi)$ is available in closed form that enables us to check first-order optimality and ensure high accuracy.

Finally, $\ell_p(\Psi)$ is expressed in terms of $\mathbf{S}.$ However, ML estimators are
scale-equivariant, so we can estimate $\Lambda$ and $\Psi$ using the
correlation matrix and scale back to $\bf S$. A particular advantage
of using the sample correlation matrix is that $\ell_p(\Psi)$ needs to be optimized over a fixed bounded rectangle $(0,1)^p$ that does not depend on the data and is conceivably numerically robust.

\subsection{Matrix--free computations}
\subsubsection{A Lanczos algorithm for calculating partial singular
  values and vectors}\label{sec:lanczos}
In order to compute the profile likelihood and its gradient, we need the $q$ largest singular values and right singular vectors of $\mathbf{W}.$ We use the Lanczos algorithm \citep{baglama,dutta} with reorthogonalization and implicit restart. Suppose that $m = \max\{2q+1,20\}$ and that $\mathbf{f}_1 \in \mathbb{R}^n$ is any random vector with $\|\mathbf{f}_1\|=1.$ Let $\mathbf{g}_1 = \mathbf{W}^{\top}\mathbf{f}_1,$ $\mathbf{F}_1 = \mathbf{f}_1$ and $\mathbf{G}_1 = \mathbf{g}_1.$ For $j=1,\ldots,m$ let $\mathbf{r}_j = \mathbf{Wg}_j - \alpha_j\mathbf{f}_j,$ reorthogonalize $\mathbf{r}_j = \mathbf{r}_j - \mathbf{F}_j\mathbf{F}_j^\top\mathbf{r}_j$ and set $\beta_j = \|\mathbf{r}_j\|,$ and if $j < m,$ update $\mathbf{f}_{j+1} = \mathbf{r}_j/\beta_j,$ $\mathbf{F}_{j+1} = [\mathbf{F}_j,~ \mathbf{f}_{j+1}],$ $\mathbf{g}_{j+1} = \mathbf{W}^\top\mathbf{f}_{j+1}-\beta_j\mathbf{g}_j,$ reorthogonalize $\mathbf{g}_{j+1} = \mathbf{g}_{j+1}-\mathbf{G}_j\mathbf{G}_j^\top\mathbf{g}_{j+1},$ $\alpha_{j+1} = \|\mathbf{g}_{j+1}\|$, $\mathbf{g}_{j+1} = \mathbf{g}_{j+1}/\alpha_{j+1},$ and set $\mathbf{G}_{j+1}=[\mathbf{G}_j,~\mathbf{g}_{j+1}].$

Next, consider the $m\times m$ bidiagonal matrix $\mathbf{B}_m$ with
diagonal entries $\alpha_1,\alpha_2,\ldots,\alpha_m$ with $(j,j+1)$th
entry $\beta_j$ for $j = 1,2,\ldots,m-1$ and all other entries as 0. Now
suppose that $h_1\geq h_2\geq\cdots \geq h_m$ are the singular values
of $\mathbf{B}_m$ and that $\tilde{\mathbf{u}}_j$'s and
$\tilde{\mathbf{v}}_j$'s are the corresponding right and left singular
vectors, which can be computed via a Sturm sequencing algorithm
\citep{wilkinson}. Also, let $\mathbf{u}_j =
\mathbf{F}_m\tilde{\mathbf{u}}_j$ and $\mathbf{v}_j =
\mathbf{G}_m\tilde{\mathbf{v}}_j$ $(1\leq j \leq m)$.  Then it can be
shown that for all $j,$ $\mathbf{W}^\top\mathbf{u}_j =
h_j\mathbf{v}_j$ and $\mathbf{Wv}_j = h_j\mathbf{u}_j +
\tilde{v}_{j,m}\mathbf{r}_m,$ where $\tilde{v}_{j,m}$ is the last
entry of $\tilde{\mathbf{v}}_j.$ Because $\|\mathbf{r}_m\| = \beta_m$
and $h_1$ is approximately the largest singular value of $\mathbf{W},$
the algorithm stops if $\beta_m|\tilde{v}_{j,m}| \leq h_1\delta$ holds
for  $j=1,2,\ldots,q,$ where $\delta$ is some prespecified tolerance
level, and $h_1,h_2,\ldots,h_q$ and
$\mathbf{v}_1,\mathbf{v}_2,\ldots,\mathbf{v}_q$ are accurate
approximations of the $q$ largest singular values and corresponding
right singular vectors of $\mathbf{W}$. % and $\mathbf{v}_1,\ldots,\mathbf{v}_q$ are the accurate approximations to the corresponding right singular vectors. 

Convergence of the reorthogonalized Lanczos algorithm often
suffers from numerical instability that slows down convergence. To
resolve this instability, \citet{baglama} proposed restarting the
Lanczos algorithm, but instead of starting from scratch, they
initialized with the first $q$ singular vectors. To that end, let
$\mathbf{f}_{m+1} = \mathbf{r}_m/\beta_m$ and reset $\mathbf{F}_{q+1}
= [\mathbf{u}_1,\ldots,\mathbf{u_q},\mathbf{f}_{m+1}].$ Then for
$j=1,2,\ldots,q$, let $\rho_j=\beta_m\tilde{v}_{j,m}$, and reset
$\mathbf{r}_q =
\mathbf{W}^\top\mathbf{f}_{m+1}-\sum_{j=1}^q\rho_j\mathbf{v}_j,$
$\alpha_{q+1}=\|\mathbf{r}_{q}\|,$ $\mathbf{g}_{q+1} =
\mathbf{r}_q/\alpha_{q+1},$ and $\mathbf{G}_{q+1} =
       [\mathbf{v}_1,\ldots,\mathbf{v}_q,\mathbf{g}_{q+1}].$ Define
       $\gamma = \mathbf{f}_{m+1}^\top\mathbf{Wg}_{q+1}$ and
       $\mathbf{r}_{q+1} = \mathbf{Wg}_{q+1} - \gamma
       \mathbf{f}_{m+1}.$ For $j=1, 2,\ldots,m-q-1,$ compute
       $\beta_{q+j} = \|\mathbf{r}_{q+j}\|,$ $\mathbf{f}_{q+j+1} =
       \mathbf{r}_{q+j}/\beta_{q+j},$ $\mathbf{F}_{q+j+1} =
              [\mathbf{F}_{q+j},~\mathbf{f}_{q+j+1}],$
              $\mathbf{g}_{q+j+1} =
              (\mathbf{I}-\mathbf{G}_{q+j}\mathbf{G}_{q+j}^\top)\mathbf{W}^\top\mathbf{f}_{q+j+1},$
              $\alpha_{q+j+1} = \|\mathbf{g}_{q+j+1}\|,$
              $\mathbf{g}_{q+j+1} = \mathbf{g}_{q+j+1}/\alpha_{q+j+1}$
              and $\mathbf{r}_{q+j+1}=(\mathbf{I} -
              \mathbf{F}_{q+j+1}\mathbf{F}_{q+j+1}^\top)\mathbf{Wg}_{q+j+1}.$
              This yields a  matrix $\mathbf{B}_{m}$ with entries
              $b_{j,j} = h_j$ and $b_{j,q}=\rho_j$ for
              $j=1,2,\ldots,q,$ and $b_{i,i} = \alpha_i$ for $q+1\leq
              i \leq m$ and $b_{i,i+1} = \beta_{i}$ for $q+1\leq i
              \leq m-1$, and all other entries 0. The matrix $\mathbf{B}_m$
              is not bidiagonal but is still small-dimensioned
              matrix whose singular value decomposition can be
              calculated very fast. Convergence of the Lanczos
              algorithm can be checked as before. This restart step is repeated until all the $q$ largest singular values converge.

The only way that $\mathbf{W}$ enters this algorithm is through
matrix-vector products of the forms $\mathbf{W}\mathbf{g}$ and
$\mathbf{W}^\top\mathbf{f}$, both of which can be computed without explicitly storing $\mathbf{W}.$ Overall, this algorithm yields the $q$ largest singular values and vectors in $O(qnp)$
 computational cost using only $O(qp)$ additional memory, resulting in
  substantial gains 
 over the traditional methods \citep{joreskog,maxwell}. These
 traditional methods require a full eigenvalue decomposition of
 $\mathbf{W}^\top\mathbf{W}$ that is of $O(p^3)$ computational complexity and requires $O(p^2)$ memory
 storage space.

Having described a scalable algorithm for computing $\ell_p(\Psi)$ and $\nabla\ell_p(\Psi)$, we detail a numerical algorithm for computing the ML estimators.

\subsubsection{Numerical optimization of the profile log-likelihood}
On the correlation scale, $\psi_{ii}$'s lie between 0 and 1. Under
this box constraint, the \textsf{factanal} function in \textsf{R} and
\textsf{factoran} function in
\textsf{MATLAB}\textsuperscript{\textregistered}~employ the
limited-memory Broyden-Fletcher-Goldfarb-Shanno quasi-Newton algorithm
\citep{byrd} with box-constraints (L-BFGS-B) to obtain the ML
estimator of $\Psi.$ However, in high dimensions, the advantages of
the L-BFGS-B algorithm  are particularly prominent. Because Newton
methods require the search direction
$-\mathbf{H}(\Psi)^{-1}\nabla\ell_p(\Psi)$, where $\mathbf{H}(\Psi)$
is the $p\times p$ Hessian matrix of $\ell_p(\Psi)$, they are
computationally prohibitive in high dimensions in terms of storage and
numerical complexity. The quasi-Newton BFGS replaces the computation
of the exact search direction by an iterative approximation using the
already computed values of $\ell_p(\Psi)$ and $\nabla\ell_p(\Psi).$
The limited-memory implementation, moreover, uses only the last few
(typically less than 10) values of $\ell_p(\Psi)$ and
$\nabla\ell_p(\Psi)$ instead of using all the past values. Overall,
L-BFGS-B reduces the storage cost from $O(p^2)$ to $O(np)$ and the
computational complexity from $O(p^3)$ to $O(np)$. Interested readers
are referred to  \cite{byrd,Zhu1994Algorithm7L} for more details on the L-BFGS-B algorithm.

% We employ the famous limited-memory Broyden-Fletcher-Goldfarb-Shanno quasi-Newton method with box-constraints (L-BFGS-B) \citep{byrd,Liu1989OnTL,Zhu1994Algorithm7L} to maximize the profile log-likelihood $\ell_p(\Psi).$ Exploiting the easily computable objective function $\ell_p(\Psi)$ and it's gradient $\nabla\ell_p(\Psi),$ the BFGS-type algorithms searches the optimization solution via quasi-Newton method which iteratively produces an approximation of the inverse Hessian matrix $\mathbf{H}^{-1}$ and is well suited for objective functions where the analytic form of the hessian is not available or difficult to compute. Furthermore, the BFGS approximation provides a recurrence relation for $\mathbf{H}^{-1}$ that allows matrix-free computation for the product of the inverse Hessian and the gradient without calculating the exact form of matrix $\mathbf{H}^{-1}$. However, the BFGS update scheme maintains the approximated inverse Hessian information based on all previously computed gradient vectors, which creates substantial challenge in memory management, particularly in high-dimensional computations. Hence the limited memory variations of BFGS type algorithms are designed that only store the last few updates (typically less than 10). An algorithm for computing the descent direction by L-BFGS method is provided in supplemental materials Section \ref{sec:supp-lbfgs}.

The L-BFGS-B algorithm requires  both $\ell_p$ and $\nabla\ell_p$ to
be computed at each iteration.
Because $\nabla \ell_p$ is available as a byproduct while computing
$\ell_p$ (see Section Sections \ref{sec:proflik} and
\ref{sec:lanczos}), we modify the implementation to jointly compute
both quantities with a single call to the Lanczos algorithm at each L-BFGS-B iteration. In comparison to \textsf{R}'s default implementation (\textsf{factanal}) that separately calls $\ell_p$ and $\nabla\ell_p$ in its optimization routines, this tweak halves the computation time.

% The key ingredients in our computational framework are (1) a Lanczos algorithm that allows us to compute few dominant singular values and vectors of high-dimensional matrices required to compute the profile log-likelihood and its gradient, and (2) a limited memory quasi-Newton optimization algorithm that avoids the storage of the large $p\times p$ Hessian of the profile log-likelihood function. These algorithms requires only matrix vector products involving the data matrix which can be computed without requiring any additional large memory. Because the only large memory requirement is that of a single $n\times p$ data matrix, we call our method matrix-free.

\section{Performance evaluations} % 3 pages 
\label{sec:sim}
\subsection{Experimental setup}
\label{sec:sim-setup}
The performance of FAD was compared to EM using  
100 simulated %high-dimensional
datasets with true $q=3$ or $5$ and
$(n,p)\in\{(100,1000),(225,3375),(400,8000)\}$. For each setting, we simulated $\psi_{ii}\sim$ i.i.d $\mathcal{U}(0.2,0.8)$ and $\lambda_{ij}\sim$ i.i.d. $\mathcal{N}(0,1)$ and set $\bm\mu=\bm0$. We also evaluated performance with  $(n,p,q)\in\{(160,24547,2),(180,24547,2),(340,24547,4)\}$ to
match the settings of our application in Section \ref{sec:app}: the
true $\Psi, \Lambda,\bm\mu$ were set to be the ML estimates from 
that  dataset.

% In our simulations, we scaled the data matrix so that each variable
% is centered and has unit standard deviation. This corresponds to
% estimating the loadings and uniquenesses on a standardized scale.

For the EM algorithm, $\bm{\Lambda}$ was initialized as the first $q$
principal components (PCs) of the scaled data matrix computed via the
the Lanczos algorithm while $\bm{\Psi}$ was started at $\mathbf{I}_p
-\mathrm{diag}(\bm{\Lambda}\bm{\Lambda}^{\top})$. FAD requires only $\bm{\Psi}$ to
be initialized, which was done in the same way as the EM.
We stopped FAD when the relative increase in $\ell_p(\Psi)$ was below
100$\epsilon_0$ and $\|\nabla\ell_p\|_\infty < \sqrt\epsilon_0$ where 
$\epsilon_0$ is the machine tolerance, which in our case was
approximately $2.2\times10^{-16}$. 
% Stopping rules are determined as follows: For FAD, the convergence
% of the L-BFGS-B method occurs when the tolerance of relative
% reduction in  is within $100$ times the machine tolerance
% ($2.22\times10^{-16}$ in our case).
The EM algorithm was terminated if the relative change in
$\ell(\Lambda,\Psi)$ was less than $10^{-6}$ and
$\|\nabla\ell_p\|_\infty< \sqrt\epsilon_0 $, or if the number of
iterations reached 5000. Therefore, FAD and EM had comparable stopping
criteria.  For each simulated dataset, we fit
models with $k=1,2,\cdots,2q$ factors and chose the number of factors
by the Bayesian Information Criterion (BIC): $-2{\hat\ell}_k + pq\log
n$ \citep{schwarz1978}, where ${\hat\ell}_k$ is the maximum
log-likelihood value with $k$ factors.  All experiments were done
using R~\citep{R} on a workstation with Intel E5-2640 v3 CPU clocked @2.60 GHz and 64GB RAM. 

\subsection{Results} % 2 page
\label{sec:sim-res}
Because BIC always correctly picked $q$, 
%In each simulation, the required CPU time, parameter estimates and the observed log-likelihood values at convergence along with the BIC values are collected.% 
we evaluated model fit for each method in terms of $\ell(\hat\Lambda,\hat\Psi),$
%the relative Frobenius distance for the
%correlation matrix $\mathbf{R}$ (corresponding to $\Sigma$) and the
%signal matrix  which were computed
$d_{\hat{\mathbf{R}}} =
\|\hat{\mathbf{R}}-\mathbf{R}\|_F/\|\mathbf{R}\|_F$ and $d_{\hat{\bm
    \Gamma}} = \|\hat{\bm\Gamma}-{\bm\Gamma}\|_F/\|{\bm\Gamma}\|_F$
where $\hat{\mathbf{\Gamma} }
= \hat{\bm{\Lambda}}^{\top}\hat{\bm{\Psi}}^{-1}\hat{\bm{\Lambda}}$ and
$\bf R$ and $\hat{\bf R}$ are the correlation matrices corresponding to 
$\Sigma$ and $\hat{\Sigma} = \hat{\bm{\Lambda}}\hat{\bm{\Lambda}}^{\top}+\hat{\bm{\Psi}}$. %, and $\hat{\mathbf{\Gamma} }
%= \hat{\bm{\Lambda}}^{\top}\hat{\bm{\Psi}}^{-1}\hat{\bm{\Lambda}}$,
%We also compared the 
\subsubsection{CPU time}
\begin{figure}[h]
\vspace{-0.2in}
  \centering
\mbox{
  \subfloat{\label{fig-time-hd}\includegraphics[width = 0.5\textwidth]{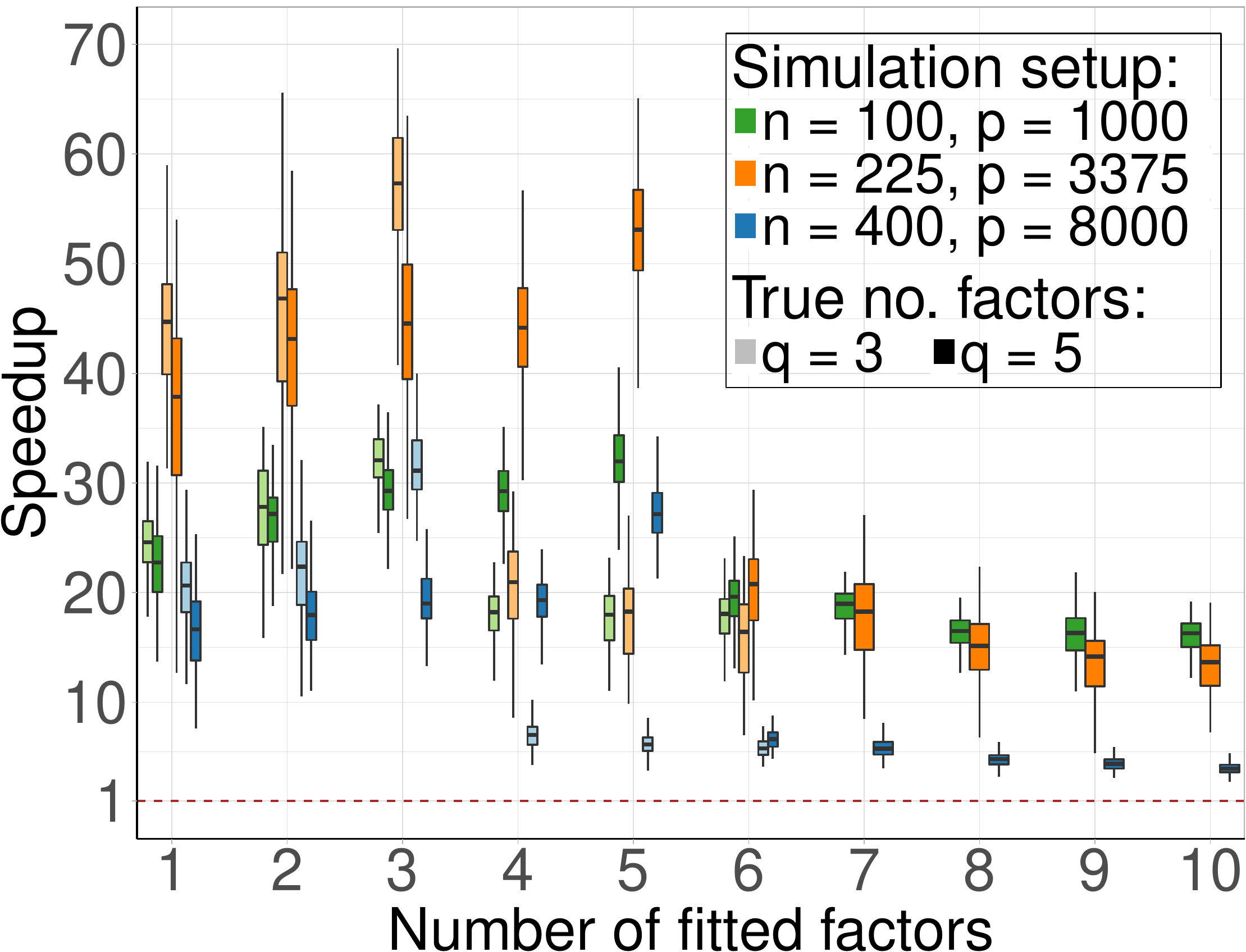}}
  \subfloat{\label{fig-time-uhd}\includegraphics[width = 0.5\textwidth]{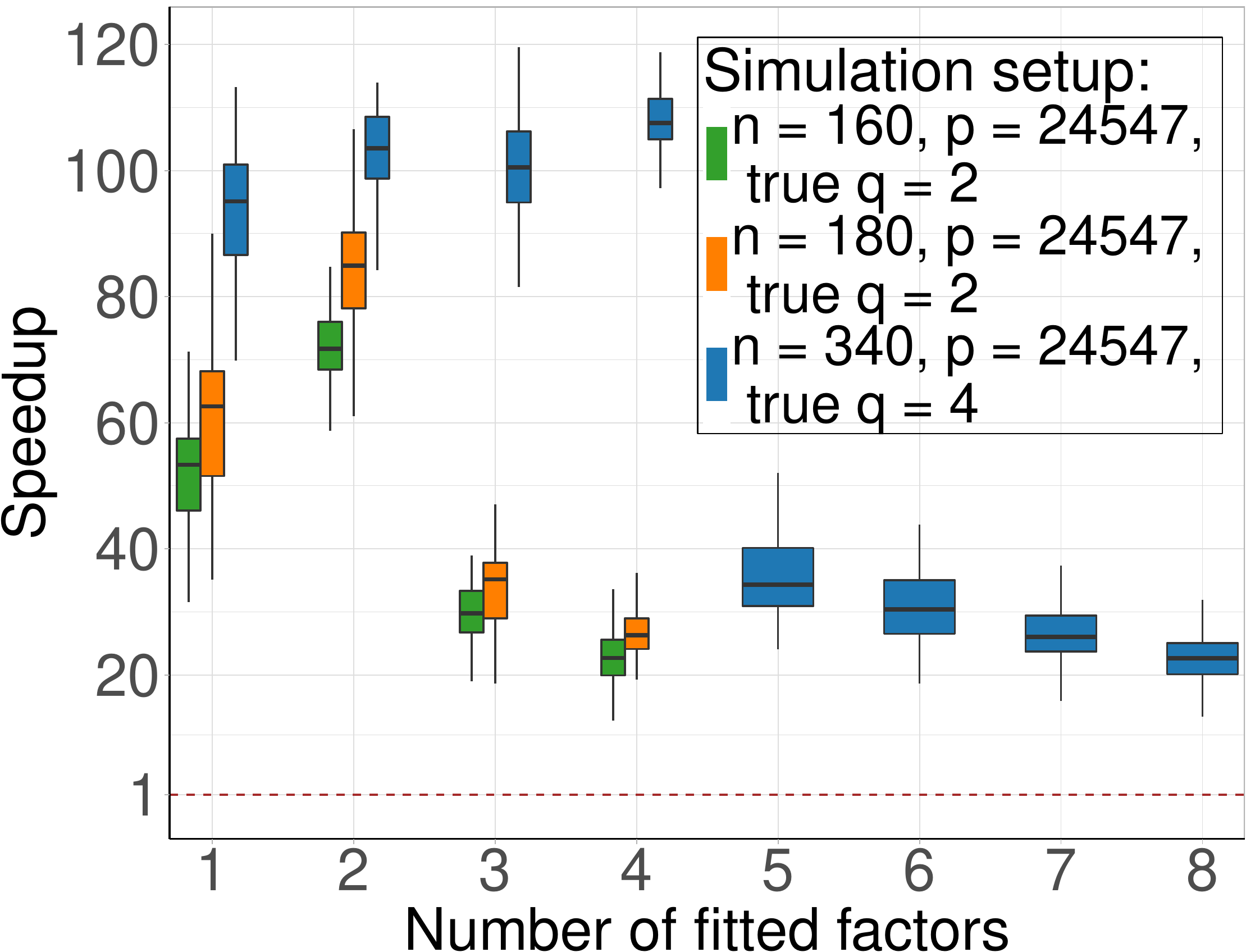}}
  }
\caption{Relative speed of FAD to EM on (left) randomly simulated and (right) data-driven cases. Lighter ones correspond to true $q=3$ and the darker ones correspond to true $q=5$.}
\label{fig-sim-time}
\vspace{-0.1in}
\end{figure}
Figure \ref{fig-sim-time} presents the relative speed of FAD to EM. Our compute times include the common initialization times. Specifically, for datasets of size $(n,p)\in\{(100,1000),(225,3375),(400,8000)\}$, FAD was 10 to 70 times faster than EM, with maximum speedup at true $q$. However, EM did not converge within 5000 iterations in any of the overfitted models. In contrast, FAD always converged but it took longer than in other cases so the speedup is underestimated because of the censoring with EM. Also, the speedup  is more pronounced (see Section \ref{sec:supp-sim-avgtime}) in the data-driven simulations where $p$ is much larger.
\subsubsection{Parameter estimation and model fit}
Under the best fitted models, FAD and EM yields identical values of $\ell_p(\hat{\Lambda},\hat{\Psi}),$ $\hat\Psi,$ $\hat\Gamma,$ and $\hat\Lambda\hat\Lambda^\top.$ Thus the relative errors (see Figure \ref{fig-sim-RG}) in estimating these parameters are also identical.
% \begin{figure}[H]
% \centering
% \begin{minipage}{.5\textwidth}
%   \subfloat[]{\label{fig-simh-fn-R}\includegraphics[width = 0.9\textwidth]{image/fn_R_h.pdf}}
%  \includegraphics[width = 0.285\textwidth]{}~\label{fig-G-hd}\includegraphics[width = 0.285\textwidth]{}~\includegraphics[width = 0.43\textwidth]{}
%  \includegraphics[width = 0.2925\textwidth]{}~\includegraphics[width = 0.2925\textwidth]{}~\includegraphics[width = 0.4125\textwidth]{}
%   \caption{Top: relative Frobenius error in estimating correlation matrix. Top-left: datasets with randomly simulated model parameters; top-right: datasets with data-driven model parameters. Bottom: relative Frobenius error in estimating signal matrix $\bm{\Gamma}$. Bottom-left: datasets with randomly simulated model parameters; bottom-right: datasets with data-driven model parameters.}
% \label{fig-sim-fn}
% \end{figure}

%   \end{minipage}%
%   \begin{minipage}{.5\textwidth}
%   \subfloat[]{\label{fig-simh-fn-G}\includegraphics[width = 0.9\textwidth]{image/fn_G_h.pdf}}
% \end{minipage}\\
% \begin{minipage}{.5\textwidth}
%   \subfloat[]{\label{fig-simuh-fn-R}\includegraphics[width = 0.9\textwidth]{image/fn_R_uhd.pdf}}
%   \end{minipage}%
%   \begin{minipage}{.5\textwidth}
%   \subfloat[]{\label{fig-simuh-fn-G}\includegraphics[width = 0.9\textwidth]{image/fn_G_uhd.pdf}}
% \end{minipage}

\subsection{Additional experiments in high-noise scenarios}
We conclude this section by evaluating performance in situations where
ostensibly, weak factors are hardly distinguished from high noise by
SVD methods and where EM may be preferable \citep{owen}. We applied FAD
and EM to the simulation setup of \citet{owen}: Here, the uniquenesses
were sampled from three inverse Gamma distributions with unit means
and variances of $0$, $1$ and $10$, and
$(n,p)\in\{(200,1000),(100,5000)\}$. Figure \ref{fig-sim-hn} shows
that our algorithm was substantially faster while having similar 
accuracy as EM.

\section{Suicide ideation study} \label{sec:app}
\begin{wraptable}{r}{0.7\textwidth}
\vspace{-0.25cm}
\caption{CPU times (rounded to the nearest seconds) for FAD and EM applied to the suicide ideation study dataset.} 
\label{tab:app}
\resizebox{.7\textwidth}{!}{\begin{tabular}{cc|cccccccccc}
  \toprule 
  & $\bm{q}$ & 1 & 2 & 3 & 4 & 5 & 6 & 7 & 8 & 9 & 10\\
 \midrule 
 \midrule\multirow{2}{*}{\textbf{Attempters}} & FAD & 3 & 3 & 4 & 5 & 5 & 6 & 6 & 7 & 9 & 9 \\ 
   & EM & 146 & 173 & 207 & 198 & 229 & 236 & 228 & 250 & 239 & 254 \\ 
  \midrule\multirow{2}{*}{\textbf{Ideators}} & FAD & 4 & 4 & 5 & 6 & 6 & 6 & 6 & 9 & 9 & 10 \\ 
   & EM & 118 & 197 & 207 & 200 & 222 & 244 & 241 & 226 & 258 & 258 \\ 
  \midrule\multirow{2}{*}{\textbf{Controls}} & FAD & 5 & 5 & 8 & 7 & 8 & 8 & 9 & 10 & 12 & 13 \\ 
   & EM & 300 & 451 & 456 & 407 & 426 & 461 & 483 & 438 & 566 & 519 \\ 
   \bottomrule 
\end{tabular}}
% \end{table}
\vspace{-0.25cm}
\end{wraptable}
We applied EFA to data from \citet{just} on an fMRI study conducted while
20 words connoting negative affects were shown to 9 suicide
attempters, 8 suicide non-attempter ideators and 17 non-ideator control
subjects. For each subject-word combination, \citet{just} provided
voxel-wise per cent changes in activation relative to the baseline in
$50\!\times\!61\!\times\!23$ image volumes.  Restricting attention to
the 24547 in-brain voxels yields datasets for the attempters,
ideators and controls of sizes
$(n,p)\in\{(180,24547),(160,24547),(340,24547)\}$. We assumed  each dataset to be a random 
sample from the multinormal distribution. Our interest was in
determining if the variation in the per cent relative change in
activation for each subject type can be explained by a few 
latent factors and whether there are differences
in these factors between the three groups of subjects.

For each dataset, we performed EFA with $q = 0,1,2,\ldots,10$ factors
and using both FAD and EM. Table \ref{tab:app} demonstrates the
computational benefits of using FAD over EM. We also used BIC to
decide on the optimal $q$ ($q_{o}$) and obtained 2-factor models for both
suicide attempters and ideators, and a 4-factor model for the control
subjects. Figure~\ref{fig-fmri} provides voxel-wise displays of the
$q_o$ factor loadings, obtained using the quartimax criterion \citep{Costello}, for each type of
subject. All the factor loadings are non-negligible only in voxels
around the ventral 
 attention network (VAN) which represents one of two sensory orienting
 systems that reorient attention towards notable stimuli and is
 closely related to involuntary actions \citep{Vossel2013DorsalAV}. 
However, there are differences between the factor loadings in each
group and also among them. 
{
\captionsetup[subfigure]{font=normal,labelformat=empty}
\begin{figure}[h]
\begin{minipage}{0.92\textwidth}
\centering
\begin{mybox}[title=Attempters,nobeforeafter,width=.46\textwidth,
boxsep=0pt,left=0pt,right=0pt,top=0pt,bottom=0pt,
remember as=one]
  \subfloat[First factor]{\label{fig-fmri-atmpt1}\includegraphics[width = 0.5\textwidth]{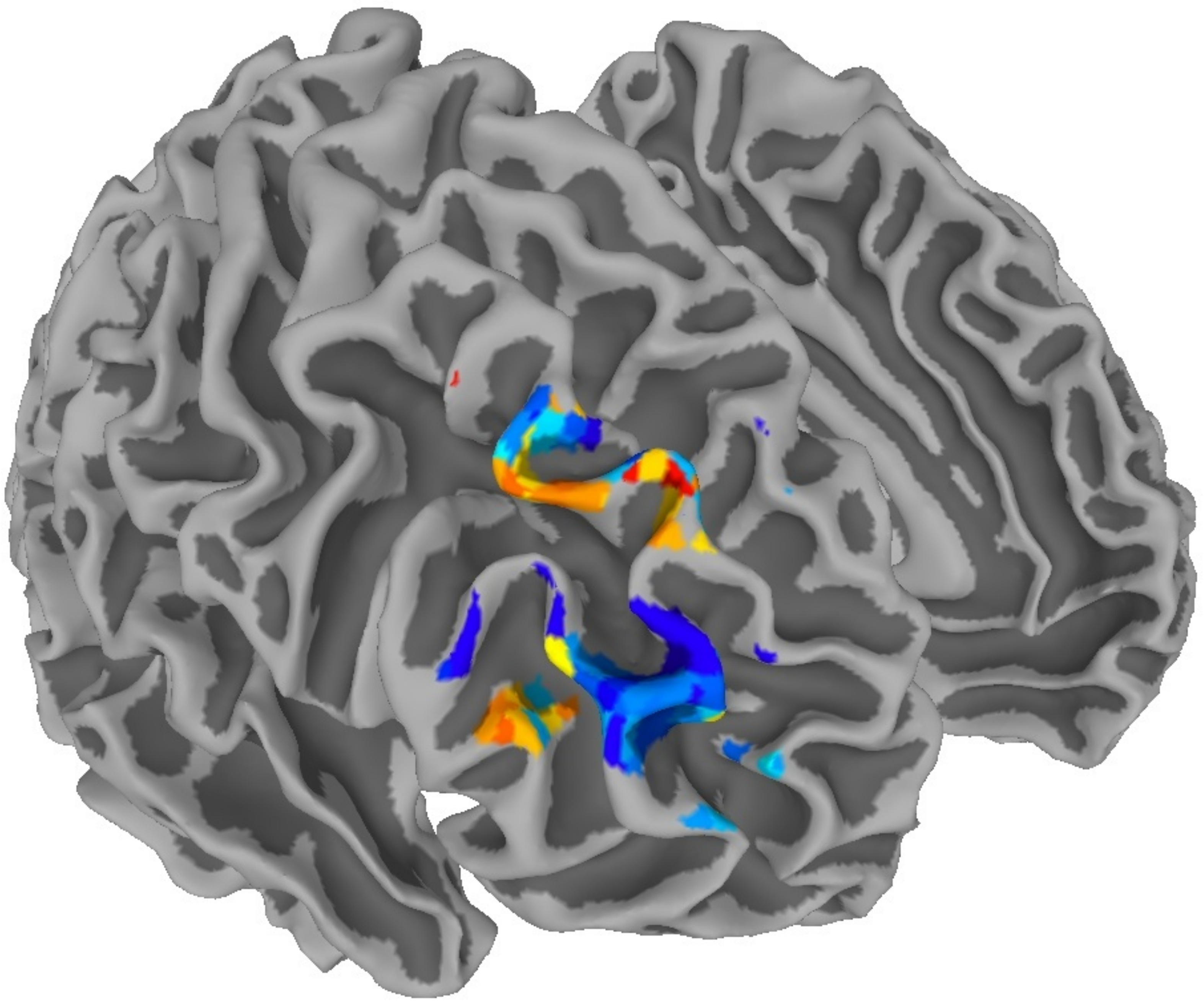}}
  \subfloat[Second factor]{\label{fig-fmri-atmpt2}\includegraphics[width = 0.5\textwidth]{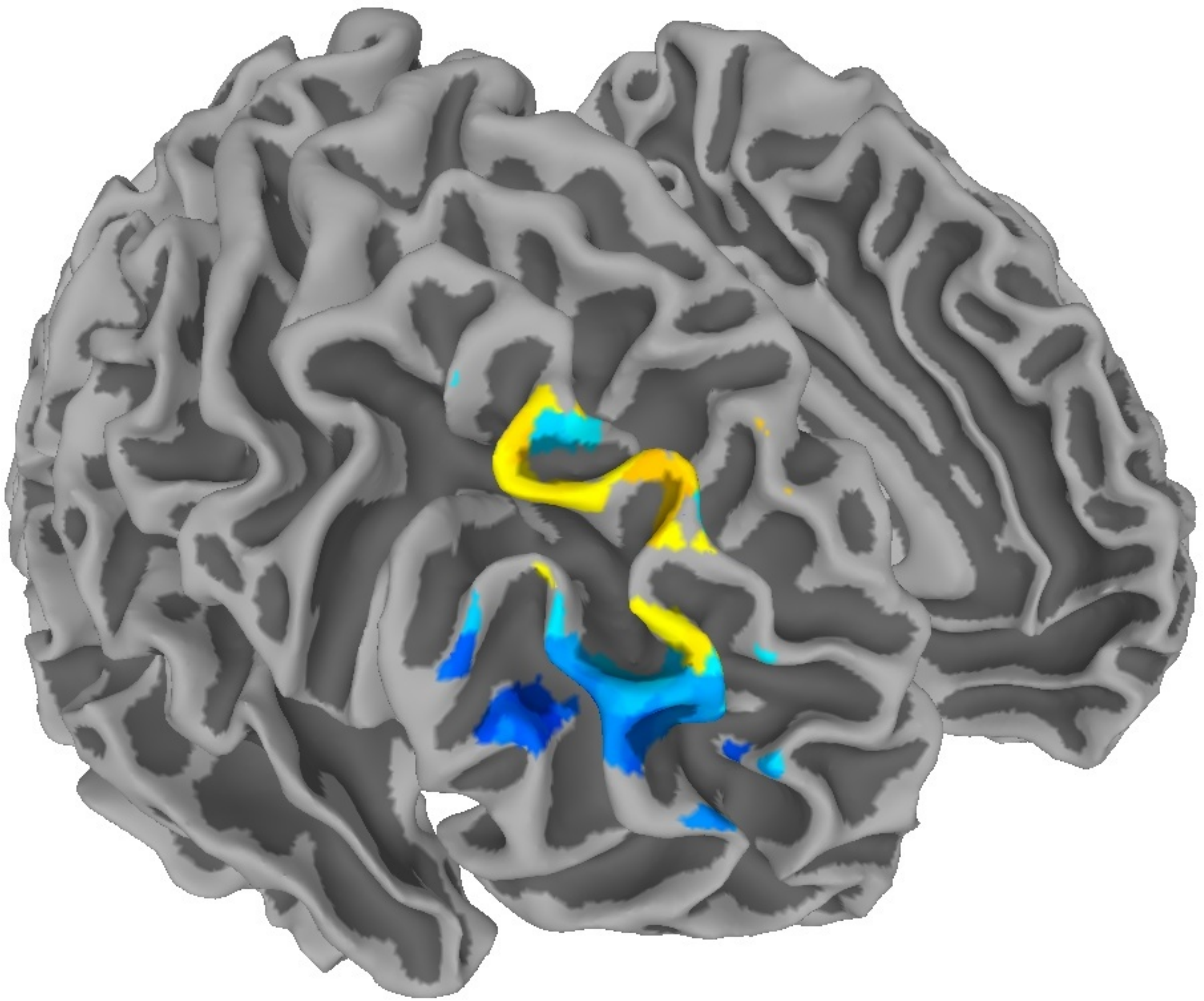}}
\end{mybox}
\begin{mybox}[title=Ideators,nobeforeafter,width=.46\textwidth,
boxsep=0pt,left=0pt,right=0pt,top=0pt,bottom=0pt,
remember as=two]
  \subfloat[First factor]{\label{fig-fmri-idt1}\includegraphics[width = 0.5\textwidth]{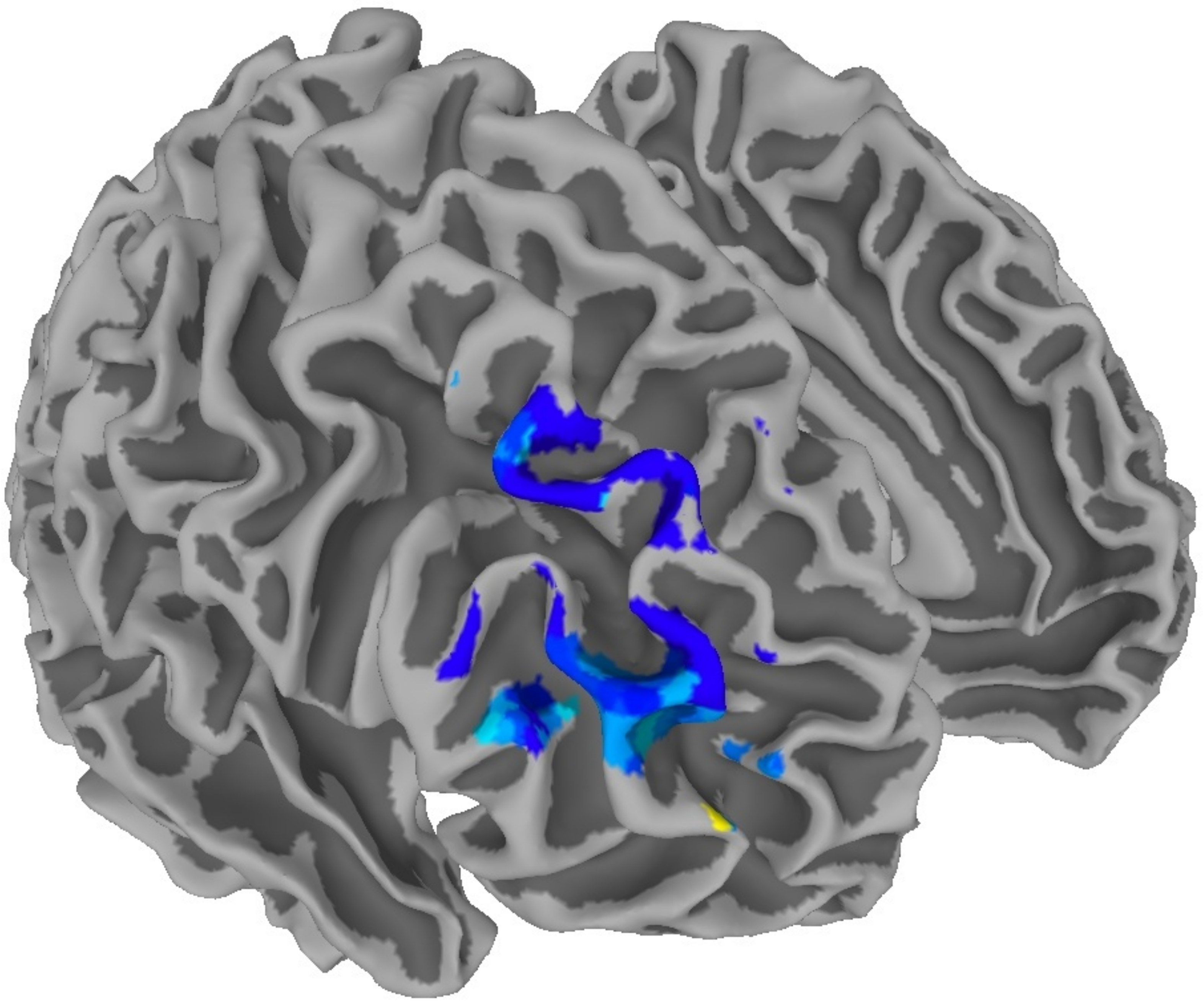}}
  \subfloat[Second factor]{\label{fig-fmri-idt2}\includegraphics[width = 0.5\textwidth]{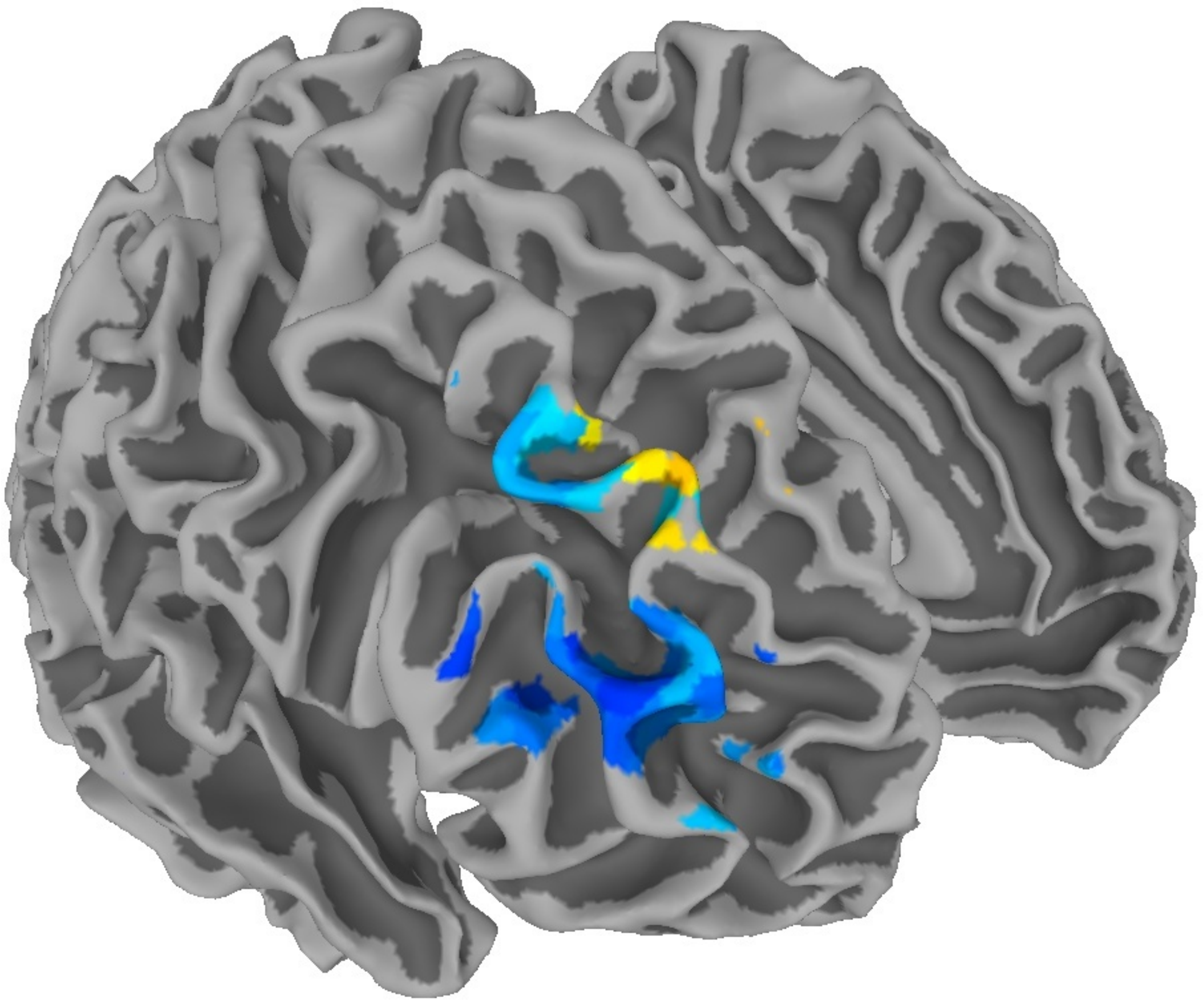}}\\
\end{mybox}
\mbox{
\begin{mybox}[title=Controls,nobeforeafter,width=.92\linewidth,
boxsep=0pt,left=0pt,right=0pt,top=0pt,bottom=0pt,
remember as=three]
  \subfloat[First factor]{\label{fig-fmri-ctrl1}\includegraphics[width = 0.25\textwidth]{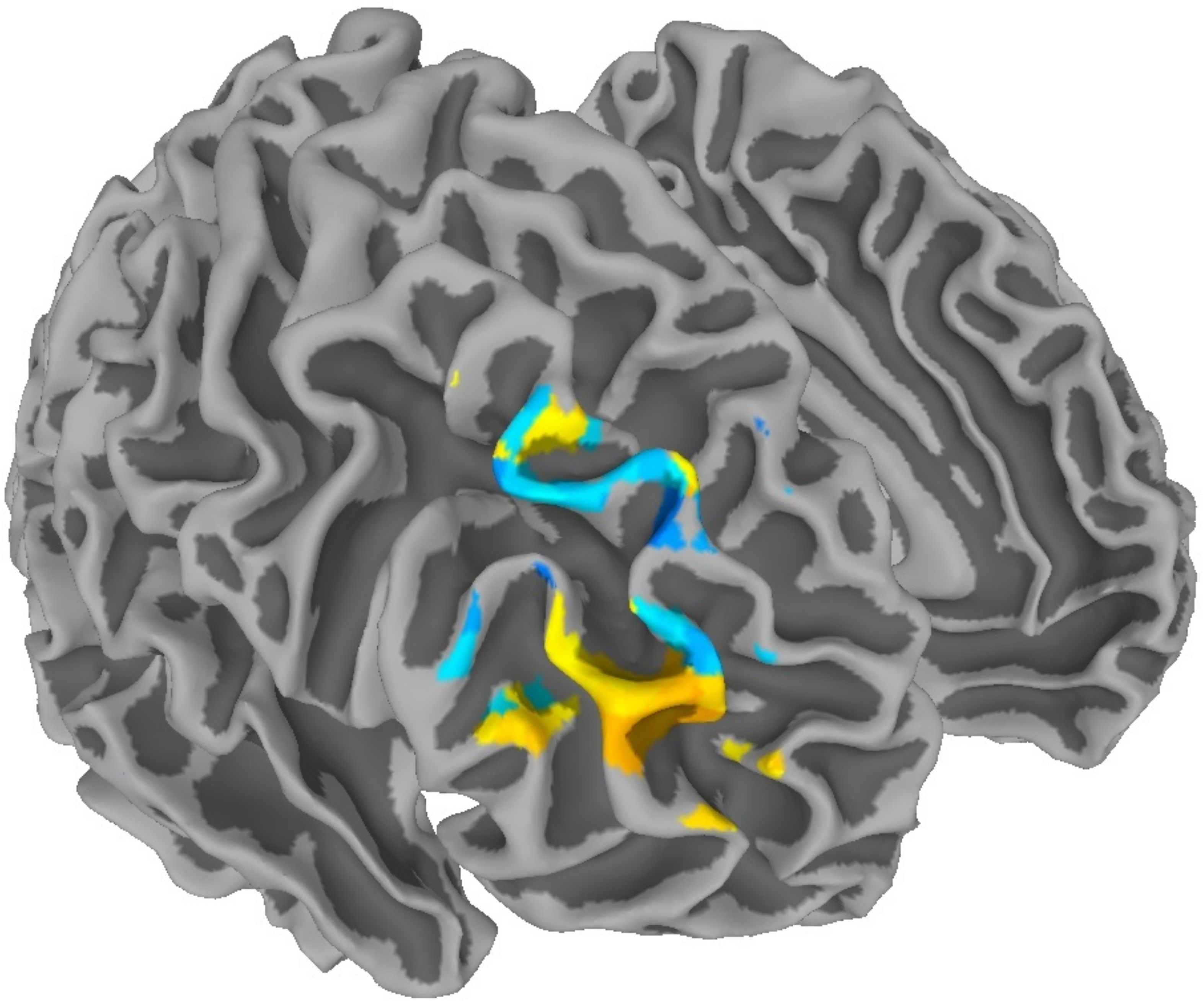}}
  \subfloat[Second factor]{\label{fig-fmri-ctrl2}\includegraphics[width = 0.25\textwidth]{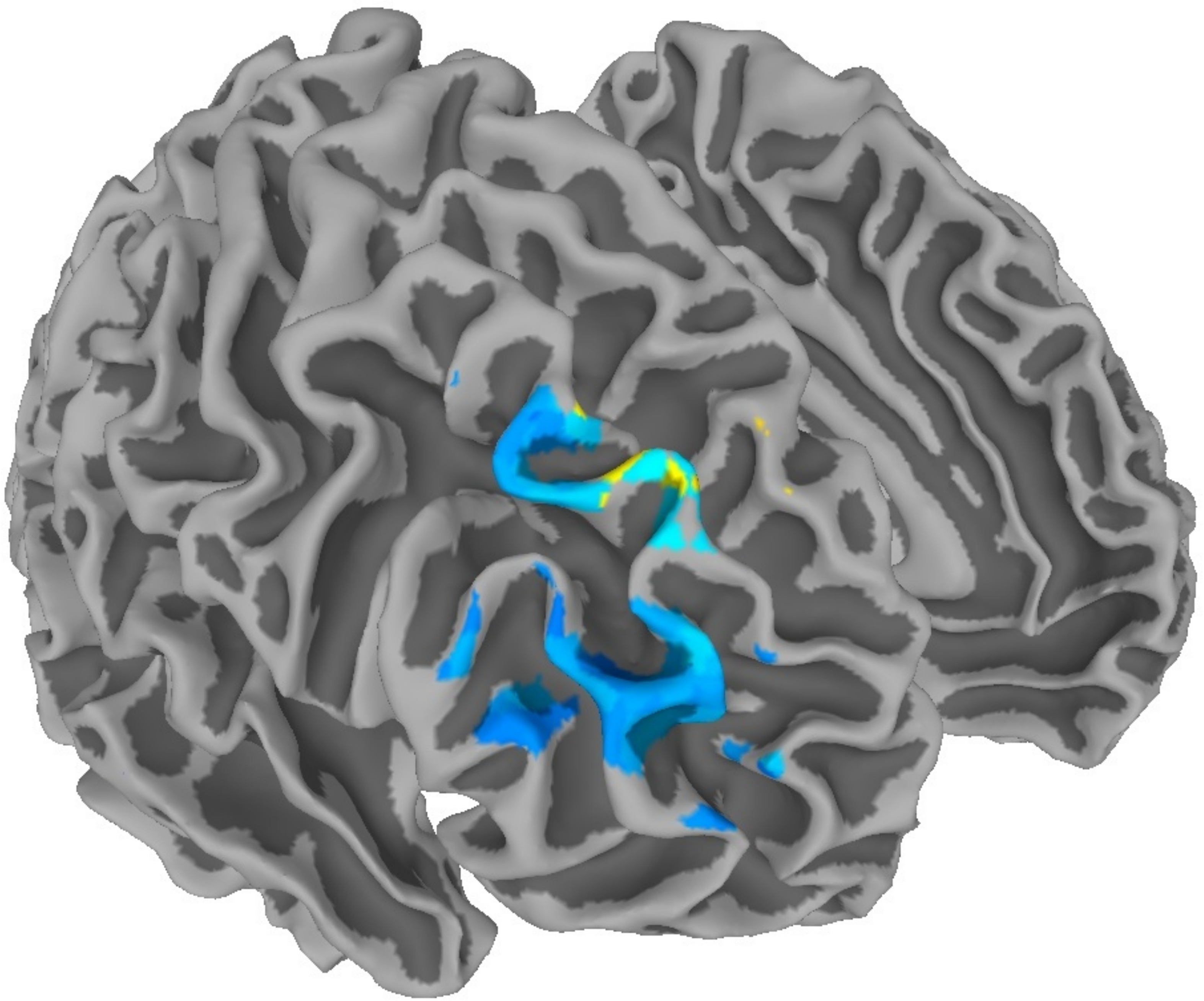}}
  \subfloat[Third factor]{\label{fig-fmri-ctrl3}\includegraphics[width = 0.25\textwidth]{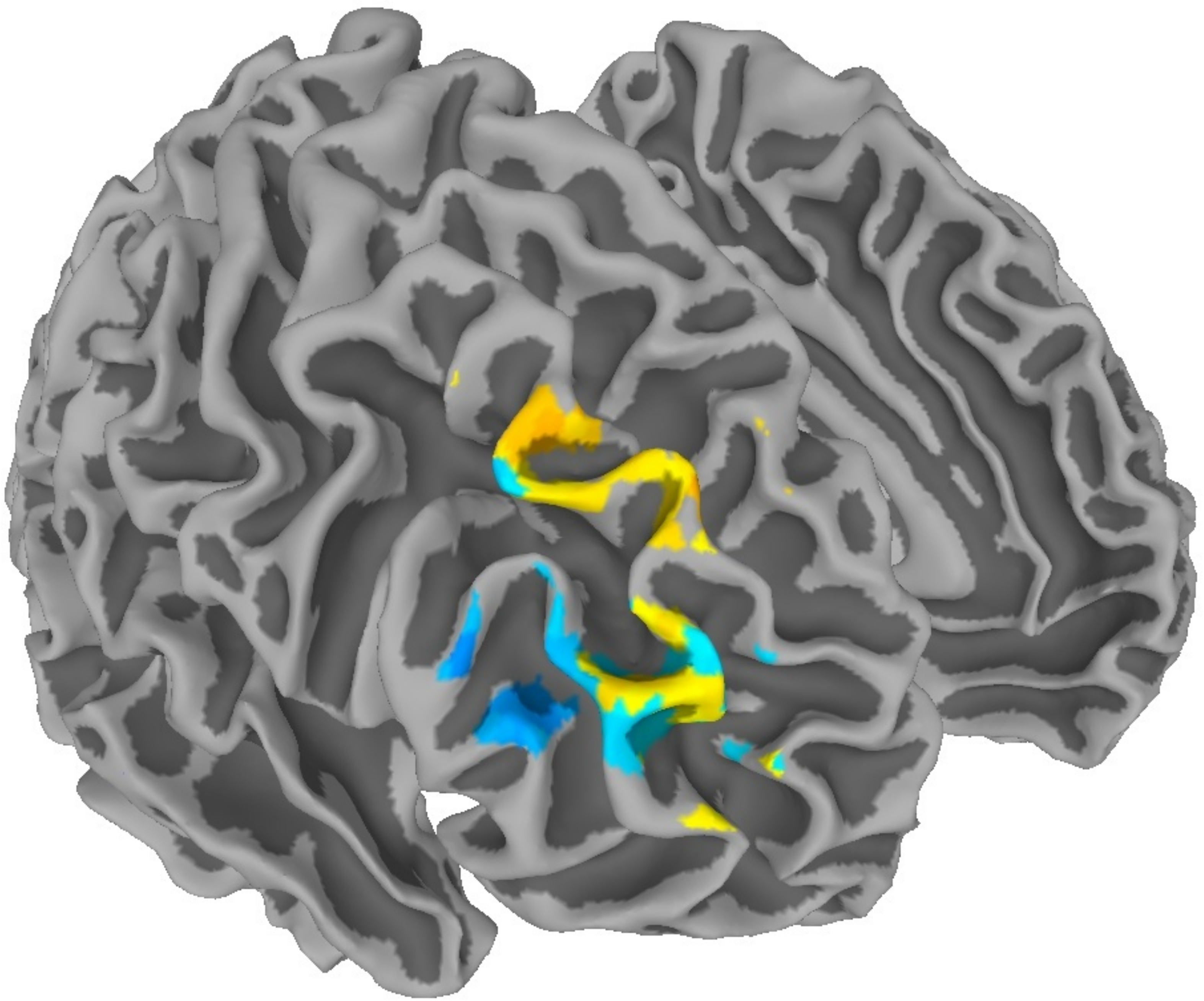}}
  \subfloat[Fourth factor]{\label{fig-fmri-ctrl4}\includegraphics[width = 0.25\textwidth]{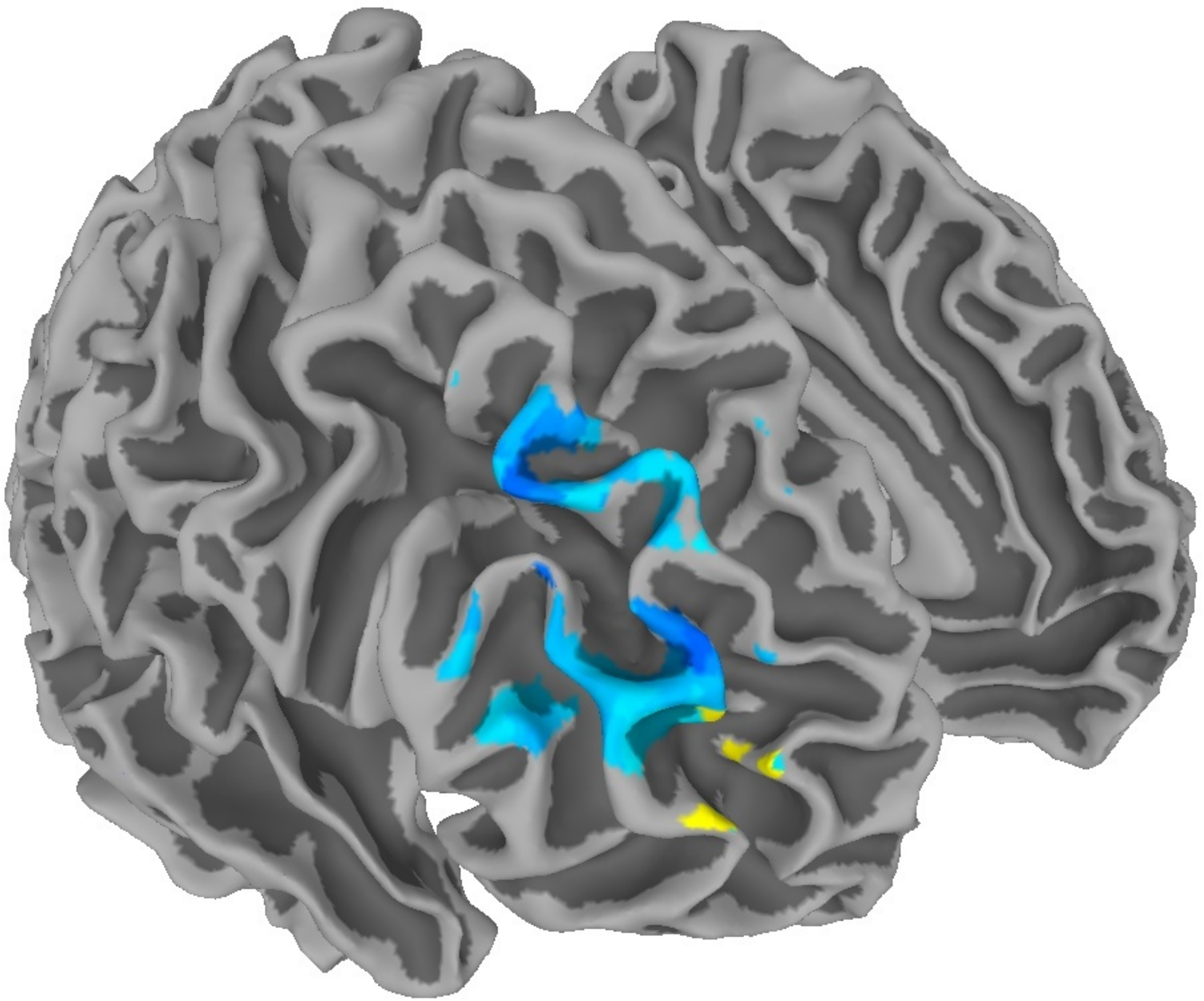}}
\end{mybox}
}
\end{minipage}\hfill
\centering
\begin{minipage}{0.08\textwidth}
\includegraphics[width=\textwidth,height=88mm]{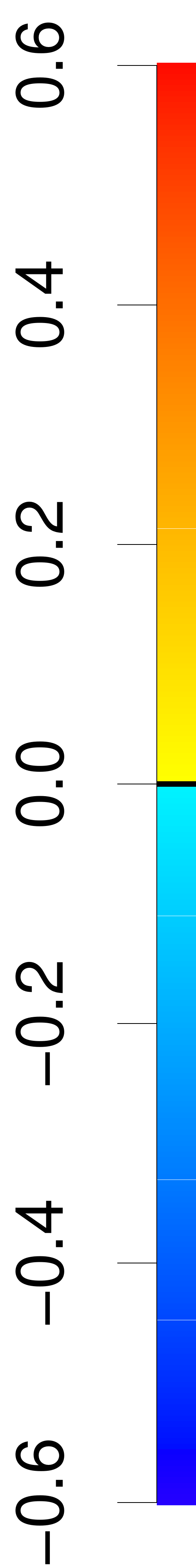}
\end{minipage}
\caption{Loading values of fitted factors for suicide attempters, ideators and controls.}
\label{fig-fmri}
\end{figure}
}
For instance, for the suicide attempters, each factor is a contrast
between different areas of the VAN, but the contrasts themselves
differ between the two factors. The first factor for the ideators is a
weighted mean of the voxels while the second factor is a contrast of
the values at the VAN voxels. For the controls, the first three
factors are different contrasts of the values at different voxels
while the fourth factor is more or less a mean of the values at these
voxels. Further,  the factor loadings in the control group are
more attenuated than for either the suicide attempters or ideators.
While a detailed analysis of our results is outside the purview
of this paper, we note that EFA has provided us with
distinct factor loadings that potentially explains the variation in suicide
attempters, non-attempter ideators and controls. However,
our analysis assumed that the image volumes are independent and Gaussian: further approaches relaxing these assumptions may be appropriate. 

% \begin{figure}[H]
% \centering
% \mbox{
%   \subfloat[]{\label{fig-time-hd}\includegraphics[width = 0.5\textwidth]{}}
%   \subfloat[]{\label{fig-time-uhd}\includegraphics[width = 0.5\textwidth]{}}
%   }
% \caption{CPU time and speedup for FAD vs EM on suicide dataset.}
% \label{fig-app-time}
% \end{figure}

\section{Discussion}
\label{sec:dis}
In this paper, we propose a new ML-based EFA method called FAD using a sophisticated computational framework that achieves both high accuracy in parameter estimation and fast convergence via matrix-free algorithms. We implement a Lanczos method for computing partial singular values and vectors and a limited-memory quasi-Newton method for ML estimation. This implementation alleviates the computational limitations of current state-of-the-art algorithms and is capable of EFA for datasets with $p>>n$. In our experiments, FAD always converged but EM struggled with overfitted models. Although not demonstrated in this paper, FAD is also well-suited for distributed computing systems because it only uses the data matrix for computing matrix-vector products. FAD paves the way to develop fast methods for mixtures of factor analyzers and factor models for non-Gaussian data in high-dimensional clustering and classification problems.

\ifCLASSOPTIONcaptionsoff
%  \newpage
\fi
%\begin{center}
%\textbf{\large Supplement to \\
%``A Matrix--free Likelihood Method for Exploratory Factor Analysis of High-dimensional Gaussian Data" \\Fan Dai, Somak Dutta and Ranjan Maitra}
%\end{center}
%%%%%%%%%% Merge with supplemental materials %%%%%%%%%%
%%%%%%%%%% Prefix a "S" to all equations, figures, tables and reset the counter %%%%%%%%%%
\setcounter{equation}{0}
\setcounter{figure}{0}
\setcounter{table}{0}
\setcounter{section}{0}
%\makeatletter
\renewcommand{\thesection}{S\arabic{section}}
\renewcommand{\thesubsection}{\thesection.\arabic{subsection}}
\renewcommand{\theequation}{S\arabic{equation}}
\renewcommand{\thefigure}{S\arabic{figure}}
\renewcommand{\thetable}{S\arabic{table}}
%%%%%%%%%% Prefix a "S" to all equations, figures, tables and reset the counter %%%%%%%%%%
\section{Supplementary materials for Methodology}
\label{sec:sim-meth}
\subsection{The EM algorithm for factor analysis on Gaussian data}
\label{sec:supp-em}
The complete data log-likelihood function is
\begin{equation} \label{eq:1}
\begin{split}
        \ell_{C}(\bm{\mu},\bm{\Lambda},\bm{\Psi}) 
        = c & -\frac{n}{2}\log\det\Psi
         - \frac{1}{2}\sum^{n}_{i=1}\{ (\mathbf{Y}_i- \bm{\mu} - \bm{\Lambda}\mathbf{Z}_i)^{\top}\Psi^{-1}(\mathbf{Y}_i- \bm{\mu} - \bm{\Lambda}\mathbf{Z}_i)\}\\
        = c & -\frac{n}{2}\log\det\Psi
         - \frac{1}{2}\Tr\{\bm{\Psi}^{-1}\sum^{n}_{i=1}{(\mathbf{Y}_i- \bm{\mu})(\mathbf{Y}_i- \bm{\mu})^{\top}} - 2\bm{\Psi}^{-1}\bm{\Lambda}\sum^{n}_{i=1}{\mathbf{Z}_i}(\mathbf{Y}_i- \bm{\mu})^{\top}\}\\
         & - \frac12\Tr\{\bm{\Lambda}^{\top}\bm{\Psi}^{-1}\bm{\Lambda}\sum^{n}_{i=1}\mathbf{Z}_i\mathbf{Z}_i^{\top}\},
\end{split}
\end{equation}
where $c$ is a constant that does not depend on the parameters.

\subsubsection{E-Step computations}
Since the ML estimate of $\bm{\mu}$ is $\bar{\mathbf{Y}}$, at the current estimates $\bm{\Lambda}_t$ and $\bm{\Psi}_t$ , the expected complete log-likelihood or so called $\mathrm{Q}$ function is given by
\begin{equation} \label{eq:2}
\begin{split}
        \mathrm{Q}(\bm{\Lambda}_{t+1},\bm{\Psi}_{t+1}|\bar{\mathbf{Y}},\bm{\Lambda}_t,\bm{\Psi}_t) 
         = &\mathrm{E}[\ell_{C}(\bm{\Lambda},\bm{\Psi}|\mathbf{Y},\bm{\Lambda}_t,\bm{\Psi}_t]\\
        = &-\frac{n}{2}\log\det\Psi
         - \frac{n}{2}\Tr\bm{\Psi}^{-1}\mathbf{S} - \Tr\{\bm{\Psi}^{-1}\bm{\Lambda}\sum^{n}_{i=1}{\mathrm{E}[\mathbf{Z}_i|\mathbf{Y},\bm{\Lambda}_t,\bm{\Psi}_t]}(\mathbf{Y}_i- \bar{\mathbf{Y}})^{\top}\}\\
         &+ \frac12\Tr\{\bm{\Lambda}^{\top}\bm{\Psi}^{-1}\bm{\Lambda}\frac{1}{n}\sum^{n}_{i=1}\mathrm{E}[\mathbf{Z}_i\mathbf{Z}_i^{\top}|\mathbf{Y},\bm{\Lambda}_t,\bm{\Psi}_t]\}.
\end{split}
\end{equation}
Since $\mathbf{Z}_i|\mathbf{Y},\bm{\Lambda}_t,\bm{\Psi}_t \sim \mathcal{N}_{q}(\bm{\Lambda}_t^{\top}\mathbf{\Sigma}_t^{-1}(\mathbf{Y}_i-\bar{\mathbf{Y}}),(\mathbf{I}_q+\bm{\Lambda}_t^{\top}\bm{\Psi}_t^{-1}\bm{\Lambda}_t)^{-1})$. Then,
\begin{equation} \label{eq:3}
    \begin{split}
    &\mathrm{E}[\mathbf{Z}_i|\mathbf{Y},\bm{\Lambda}_t,\bm{\Psi}_t]
    = \bm{\Lambda}_{t}^{\top}\mathbf{\Sigma}_{t}^{-1}(\mathbf{Y}_i-\bar{\mathbf{Y}})\\
    \end{split}
\end{equation}
and
\begin{equation} \label{eq:4}
    \begin{split}
    \mathrm{E}[\mathbf{Z}_i\mathbf{Z}_i^{\top}|\mathbf{Y},\bm{\Lambda}_t,\bm{\Psi}_t]
    = & \mathrm{Var}[\mathbf{Z}_i|\mathbf{Y},\bm{\Lambda}_t,\bm{\Psi}_t] + \mathrm{E}[\mathbf{Z}_i|\mathbf{Y},\bm{\Lambda}_t,\bm{\Psi}_t]\mathrm{E}[\mathbf{Z}_i^{\top}|\mathbf{Y}_i,\bar{\mathbf{Y}},\bm{\Lambda}_t,\bm{\Psi}_t]\\
    = & (\mathbf{I}_q+\bm{\Lambda}_t^{\top}\bm{\Psi}_t^{-1}\bm{\Lambda}_t)^{-1} + \bm{\Lambda}_t^{\top}\mathbf{\Sigma}_t^{-1}(\mathbf{Y}_i- \bar{\mathbf{Y}})(\mathbf{Y}_i- \bar{\mathbf{Y}})^{\top}\mathbf{\Sigma}_t^{-1}\bm{\Lambda}_t.
\end{split}
\end{equation}

\subsubsection{M-Step computations}
The parameters $\bm{\Lambda}_{t+1}$ and $\bm{\Psi}_{t+1}$ are obtained by maximizing  $\mathrm{Q}(\bm{\Lambda}_{t+1},\bm{\Psi}_{t+1}|\bar{\mathbf{Y}},\bm{\Lambda}_t,\bm{\Psi}_t)$ following equation \ref{eq:2}. Specifically, given ${\mathbf{Y}}$, $\bm{\Lambda}_t$ and $\bm{\Psi}_t$, the maximizer $\bm{\Lambda}_{t+1}$ is given by
\begin{equation} \label{eq:6}
\begin{split}
\hat{\bm{\Lambda}}_{t+1} 
= & 
(\frac{1}{n}\sum^{n}_{i=1}(\mathbf{Y}_i- \bar{\mathbf{Y}}){\mathrm{E}[\mathbf{Z}_i^{\top}|\mathbf{Y},\bm{\Lambda}_t,\bm{\Psi}_t]})(\frac{1}{n}\sum^{n}_{i=1}{\mathrm{E}[\mathbf{Z}_i\mathbf{Z}_i^{\top}|\mathbf{Y},\bm{\Lambda}_t,\bm{\Psi}_t]})^{-1}\\
 = & \mathbf{S}\mathbf{\Sigma}^{-1}_t\bm{\Lambda}_t((\mathbf{I}_q+\bm{\Lambda}^{\top}_t\bm{\Psi}^{-1}_t\bm{\Lambda}_t)^{-1} + \bm{\Lambda}^{\top}_t\mathbf{\Sigma}^{-1}_t\mathbf{S}\mathbf{\Sigma}^{-1}_t\bm{\Lambda}_t)^{-1}\\
\end{split}
\end{equation}
where $\mathbf{\Sigma}_t = \bm{\Lambda}_t\bm{\Lambda}^{\top}_t + \bm{\Psi}_t$. By Woodbury matrix identity \citep{hendersonandsearle81}, $\mathbf{\Sigma}^{-1}_t\bm{\Lambda}_t = \bm{\Psi}^{-1}_t\bm{\Lambda}_t(\mathbf{I}_q+\bm{\Lambda}^{\top}_t\bm{\Psi}^{-1}_t\bm{\Lambda}_t)^{-1}$, so \ref{eq:6} can be simplified as
\begin{equation} \label{eq:7}
\begin{split}
\hat{\bm{\Lambda}}_{t+1} 
 = & \mathbf{S}\bm{\Psi}^{-1}_t\bm{\Lambda}_t(\mathbf{I}_q+\bm{\Lambda}^{\top}_t\bm{\Psi}^{-1}_t\bm{\Lambda}_t)^{-1}((\mathbf{I}_q+\bm{\Lambda}^{\top}_t\bm{\Psi}^{-1}_t\bm{\Lambda}_t)^{-1} + \bm{\Lambda}^{\top}_t\mathbf{\Sigma}^{-1}_t\mathbf{S}\bm{\Psi}^{-1}_t\bm{\Lambda}_t(\mathbf{I}_q+\bm{\Lambda}^{\top}_t\bm{\Psi}^{-1}_t\bm{\Lambda}_t)^{-1})^{-1}\\
 = & \mathbf{S}\bm{\Psi}^{-1}_t\bm{\Lambda}_t(\mathbf{I}_q + \bm{\Lambda}^{\top}_t\mathbf{\Sigma}^{-1}_t\mathbf{S}\bm{\Psi}^{-1}_t\bm{\Lambda}_t)^{-1}.
\end{split}
\end{equation}
Next, given $\mathbf{Y},$ $\bm{\Lambda}_t$, $\bm{\Psi}_t$ and $\hat{\bm{\Lambda}}_{t+1}$, the ML estimate of $\bm{\Psi}_{t+1}$ is given by
\begin{equation} 
\begin{split}
\hat{\bm{\Psi}}_{t+1} 
= 
\mathrm{diag}\Big(&\mathbf{S} - \frac{2}{n}\sum^{n}_{i=1}(\mathbf{Y}_i- \bar{\mathbf{Y}}){\mathrm{E}[\mathbf{Z}_i^{\top}|\mathbf{Y},\bm{\Lambda}_t,\bm{\Psi}_t]}\hat{\bm{\Lambda}}^{\top}_{t+1} \\
& + \hat{\bm{\Lambda}}_{t+1}\frac{1}{n}\sum^{n}_{i=1}{\mathrm{E}[\mathbf{Z}_i\mathbf{Z}_i^{\top}|\mathbf{Y},\bm{\Lambda}_t,\bm{\Psi}_t]})^{-1}\hat{\bm{\Lambda}}^{\top}_{t+1}\Big).
\end{split}
\end{equation}
Substitute with \ref{eq:6}, we get
\begin{equation} 
\begin{split}
\hat{\bm{\Psi}}_{t+1} 
= &
\mathrm{diag}\Big(\mathbf{S} - \frac{2}{n}\sum^{n}_{i=1}(\mathbf{Y}_i- \bar{\mathbf{Y}}){\mathrm{E}[\mathbf{Z}_i^{\top}|\mathbf{Y},\bm{\Lambda}_t,\bm{\Psi}_t]}\hat{\bm{\Lambda}}^{\top}_{t+1} \\
& \qquad\ + \frac{1}{n}\sum^{n}_{i=1}(\mathbf{Y}_i- \bar{\mathbf{Y}}){\mathrm{E}[\mathbf{Z}_i^{\top}|\mathbf{Y},\bm{\Lambda}_t,\bm{\Psi}_t]}\hat{\bm{\Lambda}}^{\top}_{t+1}\Big)\\
= & \mathrm{diag}\Big(\mathbf{S} - 2\mathbf{S}\mathbf{\Sigma}^{-1}_t\bm{\Lambda}_t\hat{\bm{\Lambda}}^{\top}_{t+1}\Big).
\end{split}
\end{equation}

\subsection{Proof of Lemma \ref{lemma:profileout}}
\label{sec:supp-lemma}
From equation \ref{eqn:loglikelihood}, the ML estimates of $\bm{\Lambda}$ and $\bm{\Psi}$ are obtained by solving the score equations
\begin{equation} \label{eq:system_eq}
\begin{split}
\begin{cases}
    & \bm{\Lambda}(\mathbf{I}_q + \bm{\Lambda}^\top\bm{\Psi}^{-1}\bm{\Lambda}) = 
    \mathbf{S}\bm{\Psi}^{-1}\bm{\Lambda}\\
    & \bm{\Psi} = \mathrm{diag}(\mathbf{S}-\bm{\Lambda}\bm{\Lambda}^\top)
    \end{cases}
    \end{split}
\end{equation}
From $\bm{\Lambda}(\mathbf{I}_q + \bm{\Lambda}^\top\bm{\Psi}^{-1}\bm{\Lambda}) = 
    \mathbf{S}\bm{\Psi}^{-1}\bm{\Lambda}$, we have
    \begin{equation}\label{eq:profileout_Lambda}
       \bm{\Psi}^{-1/2}\bm{\Lambda}(\mathbf{I}_q + (\bm{\Psi}^{-1/2}\bm{\Lambda})^\top\bm{\Psi}^{-1/2}\bm{\Lambda}) = \bm{\Psi}^{-1/2}\mathbf{S}\bm{\Psi}^{-1/2}\bm{\Psi}^{-1/2}\bm{\Lambda}.
    \end{equation}
Suppose that $\bm{\Psi}^{-1/2}\mathbf{S}\bm{\Psi}^{-1/2} = \mathbf{V}\mathbf{D}\mathbf{V}^\top$ and that the diagonal elements in $\mathbf{D}$ are in decreasing order with $\theta_1\geq\theta_2\geq,\cdots,\geq\theta_p$. Let $\mathbf{D} = \begin{bmatrix} 
\mathbf{D}_q & 0 \\
0 & \mathbf{D}_m 
\end{bmatrix}$ with $m=p-q$ and $\mathbf{D}_q$ containing the largest $q$ eigenvalues  $\theta_1\geq\theta_2\geq,\cdots,\geq\theta_q$. The corresponding $q$ eigenvectors forms columns of matrix $\mathbf{V}_q$ so that $\mathbf{V} = [\mathbf{V}_q,\mathbf{V}_m]$. Then, if $\mathbf{D}_q > \mathbf{I}_q$, \ref{eq:profileout_Lambda} shows that
\begin{equation}
     \bm{\Lambda} =  \bm{\Psi}^{1/2}\mathbf{V}_q(\mathbf{D}_q-\mathbf{I}_q)^{1/2}.
\end{equation}
The square roots of $\theta_1,\cdots,\theta_q$ are the $q$ largest singular values of $n^{-1/2}(\mathbf{Y}-\mathbf{1}\bar{\mathbf{Y}}^\top)\bm{\Psi}^{-1/2}$ and columns in $\mathbf{V}_q$ are then the corresponding $q$ right-singular vectors. Hence, conditional on $\bm{\Psi}$, $\bm{\Lambda}$ is maximized at $\hat{\bm{\Lambda}} = \bm{\Psi}^{1/2}\mathbf{V}_q\bm{\Delta}$, where $\bm{\Delta}$ is a diagonal matrix with elements $\mathrm{max}(\theta_i-1,0)^{1/2}, i = 1,\cdots,q$.

From the construction of $\mathbf{V}_q$ and $\mathbf{V}_m$, we have $\mathbf{V}^\top_q\mathbf{V}_q = \mathbf{I}_q\text{, }
\mathbf{V}^\top_m\mathbf{V}_m = \mathbf{I}_m\text{, }
\mathbf{V}_q\mathbf{V}^\top_q + \mathbf{V}_m\mathbf{V}^\top_m = \mathbf{I}_p\text{, }\mathbf{V}^\top_q\mathbf{V}_m = \boldsymbol{0}$ and hence, $(\mathbf{V}_q\mathbf{D}_q\mathbf{V}^\top_q +  \mathbf{V}_m\mathbf{V}^\top_m)(\mathbf{V}_q\mathbf{D}^{-1}_q\mathbf{V}^\top_q +  \mathbf{V}_m\mathbf{V}^\top_m) = \mathbf{I}_p$.

Let $\mathbf{A} = \mathbf{V}_q\bm{\Delta}^2\mathbf{V}^\top_q$. Then $\mathbf{A}\mathbf{A} = \mathbf{V}_q\bm{\Delta}^4\mathbf{V}^\top_q$ and
\begin{equation}\label{eq:matrix_results1}
    \begin{split}
       |\mathbf{A}+\mathbf{I}_p| 
       = & |(\mathbf{A}+\mathbf{I}_p)\mathbf{A}|/|\mathbf{A}| \\
       = & |\mathbf{V}_q(\bm{\Delta}^4+\bm{\Delta}^2)\mathbf{V}^\top_q|/|\mathbf{V}_q\bm{\Delta}^2\mathbf{V}^\top_q| \\
       = & |\bm{\Delta}^2+\mathbf{I}_q| = \prod_{j=1}^{q}{\theta_j}
       \end{split}
\end{equation}
and
\begin{equation}\label{eq:matrix_results2}
      \begin{split}
      (\mathbf{A}+\mathbf{I}_p)^{-1} 
      = & (\mathbf{V}_q\bm{\Delta}^2\mathbf{V}^\top_q + \mathbf{V}_q\mathbf{V}^\top_q + \mathbf{V}_m\mathbf{V}^\top_m)^{-1}\\
      = & (\mathbf{V}_q(\bm{\Delta}^2+\mathbf{I}_q)\mathbf{V}^\top_q +  \mathbf{V}_m\mathbf{V}^\top_m)^{-1}\\
      = & (\mathbf{V}_q\mathbf{D}_q\mathbf{V}^\top_q +  \mathbf{V}_m\mathbf{V}^\top_m)^{-1}\\
      = & \mathbf{V}_q\mathbf{D}^{-1}_q\mathbf{V}^\top_q +  \mathbf{V}_m\mathbf{V}^\top_{m}.
    \end{split}
\end{equation}
Based on \ref{eq:matrix_results1} and \ref{eq:matrix_results2} and equation \ref{eqn:profilelikelihood}, the profile log-likelihood is given by
\begin{equation}\label{eq:proof_lemma1}
    \begin{split}
        \ell_{p}(\bm{\Psi}) 
        = c & - \frac{n}{2}\log{|\hat{\bm{\Lambda}}\hat{\bm{\Lambda}}^\top + \bm{\Psi}|} -\frac{n}{2}\Tr(\hat{\bm{\Lambda}}\hat{\bm{\Lambda}}^\top + \bm{\Psi})^{-1}\mathbf{S}\\
                = c & - \frac{n}{2}\Big\{\log|\bm{\Psi}^{1/2}(\mathbf{V}_q\bm{\Delta}^2\mathbf{V}^\top_q + \mathbf{I}_p)\bm{\Psi}^{1/2}| + \Tr\bm{\Psi}^{1/2}(\mathbf{V}_q\bm{\Delta}^2\mathbf{V}^\top_q + \mathbf{I}_p)\bm{\Psi}^{1/2})^{-1}\mathbf{S}\Big\}\\
        = c & - \frac{n}{2}\Big\{ \log\det\Psi + \log{|\mathbf{V}_q\bm{\Delta}^2\mathbf{V}^\top_q + \mathbf{I}_p|} + \Tr(\mathbf{V}_q\mathbf{D}^{-1}_q\mathbf{V}^\top_q +  \mathbf{V}_m\mathbf{V}^\top_m)\bm{\Psi}^{-1/2}\mathbf{S}\bm{\Psi}^{-1/2}
        \Big\}\\
        = c & - \frac{n}{2}\Big\{  \log\det\Psi + \sum_{j=1}^{q}{\log{\theta_j}} + \Tr(\mathbf{D}^{-1}_q\mathbf{V}^\top_q\mathbf{V}\mathbf{D}\mathbf{V}^\top\mathbf{V}_q + \Tr\mathbf{V}^\top_m\mathbf{V}\mathbf{D}\mathbf{V}^\top\mathbf{V}_m\Big\}\\
        = c & - \frac{n}{2}\Big\{  \log\det\Psi + \sum_{j=1}^{q}{\log{\theta_j}} + \Tr\mathbf{D}^{-1}_q\mathbf{D}_q + \Tr\mathbf{D}_m\Big\}\\
        = c &- \frac{n}{2}\Big\{ \log\det\Psi + \sum_{j=1}^{q}{\log{\theta_j}} + q + \Tr\bm{\Psi}^{-1}\mathbf{S}- \sum_{j=1}^{q}{\theta_j}\Big\}.
    \end{split}
\end{equation}

\section{Additional results for simulation studies}
\label{sec:supp-sim}
\subsection{Average CPU time}
\label{sec:supp-sim-avgtime}
\begin{table}[ht]
\caption{Average CPU time (in seconds) of FAD and EM applied with 1-6 factors for randomly simulated datasets where true $q=3$.} 
\label{table:app}
\centering
%\begingroup\normalsize
\resizebox{.7\textwidth}{!}{\begin{tabular}{cc|cccccc}
  \toprule 
  & & 1 & 2 & 3 & 4 & 5 & 6 \\
 \midrule 
 \midrule\multirow{2}{*}{\bm{$(n,p,q)=(10^2,10^3,3)$}} & FAD & 0.101 &0.092 &0.096 &0.116 &0.122 &0.128\\ 
   & EM & 2.494 &2.519 &3.076 &2.012 &2.075 &2.162\\ 
  \midrule\multirow{2}{*}{\bm{$(n,p,q)=(15^2,15^3,3)$}} & FAD & 0.639 &0.514 &0.486 &0.841 &0.966 &1.025\\
   & EM & 24.798 &22.885 &27.906 &16.822 &16.630 &15.722\\ 
  \midrule\multirow{2}{*}{\bm{$(n,p,q)=(20^2,20^3,3)$}} & FAD & 2.933 &2.658 &2.580 &7.135 &7.863 &8.590\\ 
   & EM & 57.052 &57.527 &81.463 &49.689 &48.508 &49.504\\ 
   \bottomrule 
\end{tabular}}
%\endgroup
\end{table}

\begin{table}[ht]
\caption{Average CPU time (in seconds) of FAD and EM applied with 1-10 factors for randomly simulated datasets where true $q=5$.} 
\label{table:app2}
\centering
%\begingroup\normalsize
\resizebox{.95\textwidth}{!}{\begin{tabular}{cc|cccccccccc}
  \toprule 
  & & 1 & 2 & 3 & 4 & 5 & 6 & 7 & 8 & 9 & 10\\
 \midrule 
 \midrule\multirow{2}{*}{\bm{$(n,p,q)=(10^2,10^3,5)$}}
   & FAD &0.102 &0.095 &0.096 &0.096 &0.094 &0.119 &0.124 &0.128 &0.134 &0.137\\ 
   & EM &2.290 &2.539 &2.808 &2.800 &2.985 &2.301 &2.327 &2.097 &2.166 &2.196\\ 
  \midrule\multirow{2}{*}{\bm{$(n,p,q)=(15^2,15^3,5)$}} & FAD & 0.667 &0.513 &0.501 &0.507 &0.497 &0.828 &0.919 &1.039 &1.108 &1.143\\ 
   & EM &22.545 &21.066 &22.300 &22.197 &26.300 &16.544 &15.796 &14.767 &14.292 &14.789\\ 
  \midrule\multirow{2}{*}{\bm{$(n,p,q)=(20^2,20^3,5)$}} & FAD &  2.937  &2.687  &2.553  &2.583  &2.590  &7.114  &8.167  &9.238 &10.426 &11.157\\ 
   & EM & 47.200 &47.333 &49.867 &49.956 &71.308 &47.469 &47.119 &43.828 &45.304 &44.141\\ 
   \bottomrule 
\end{tabular}}
%\endgroup
\end{table}

\begin{table}[ht]
\caption{Average CPU time (in seconds) of FAD and EM applied for data-driven models.} 
\label{table:app3}
\centering
%\begingroup\normalsize
\resizebox{.9\textwidth}{!}{\begin{tabular}{cc|cccccccc}
  \toprule 
  & & 1 & 2 & 3 & 4 & 5 & 6 & 7 & 8\\
 \midrule 
 \midrule\multirow{2}{*}{\bm{$(n,p,q)=(160,24547,2)$}} & FAD &  5.007  &4.222 &10.835 &13.636 &-- &-- &-- &--\\
 & EM & 253.021 &304.909 &311.916 &303.712 &-- &-- &-- &--\\ 
  \midrule\multirow{2}{*}{\bm{$(n,p,q)=(180,24547,2)$}} & FAD &  4.927  &4.104 &10.411 &12.058 &-- &-- &-- &--\\
 & EM & 287.824 &345.504 &331.919 &314.723 &-- &-- &-- &--\\ 
  \midrule\multirow{2}{*}{\bm{$(n,p,q)=(340,24547,4)$}} & FAD &  6.645  &7.121  &7.449  &6.688 &22.294 &26.575 &31.109 &34.208\\
 & EM & 648.759 &734.226 &745.902 &735.263 &767.010 &789.614 &802.502 &748.395\\ 
   \bottomrule 
\end{tabular}}
%\endgroup
\end{table}

\subsection{Estimation errors in parameters}
\vspace{-0.25in}
\label{sec:supp-sim-RG}
\begin{figure}[H]
  \centering
  \mbox{
  \subfloat[]{\label{fig-fn-R-hd}\includegraphics[width = 0.31\textwidth]{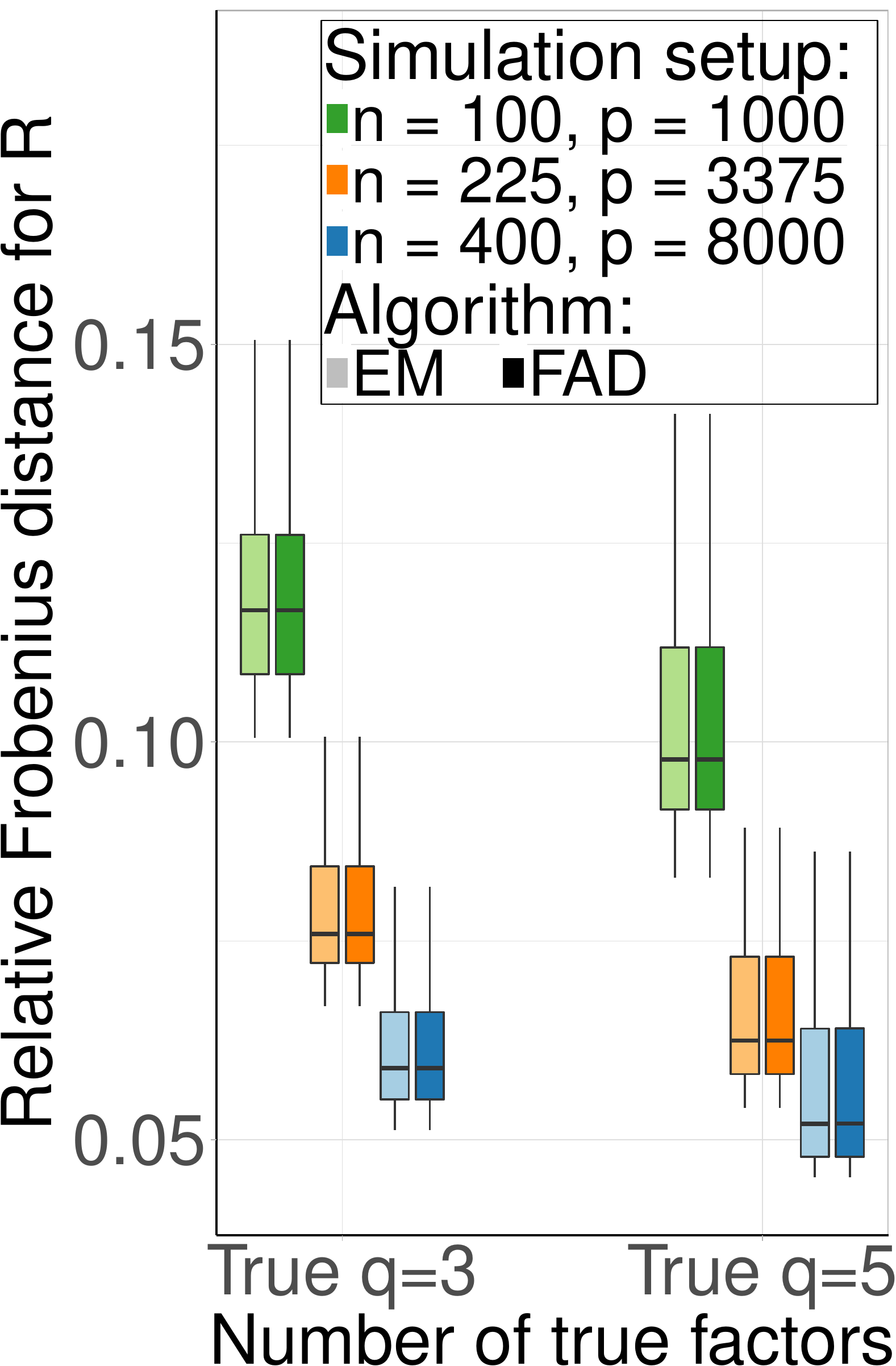}}
  \subfloat[]{\label{fig-fn-G-hd}\includegraphics[width = 0.3225\textwidth]{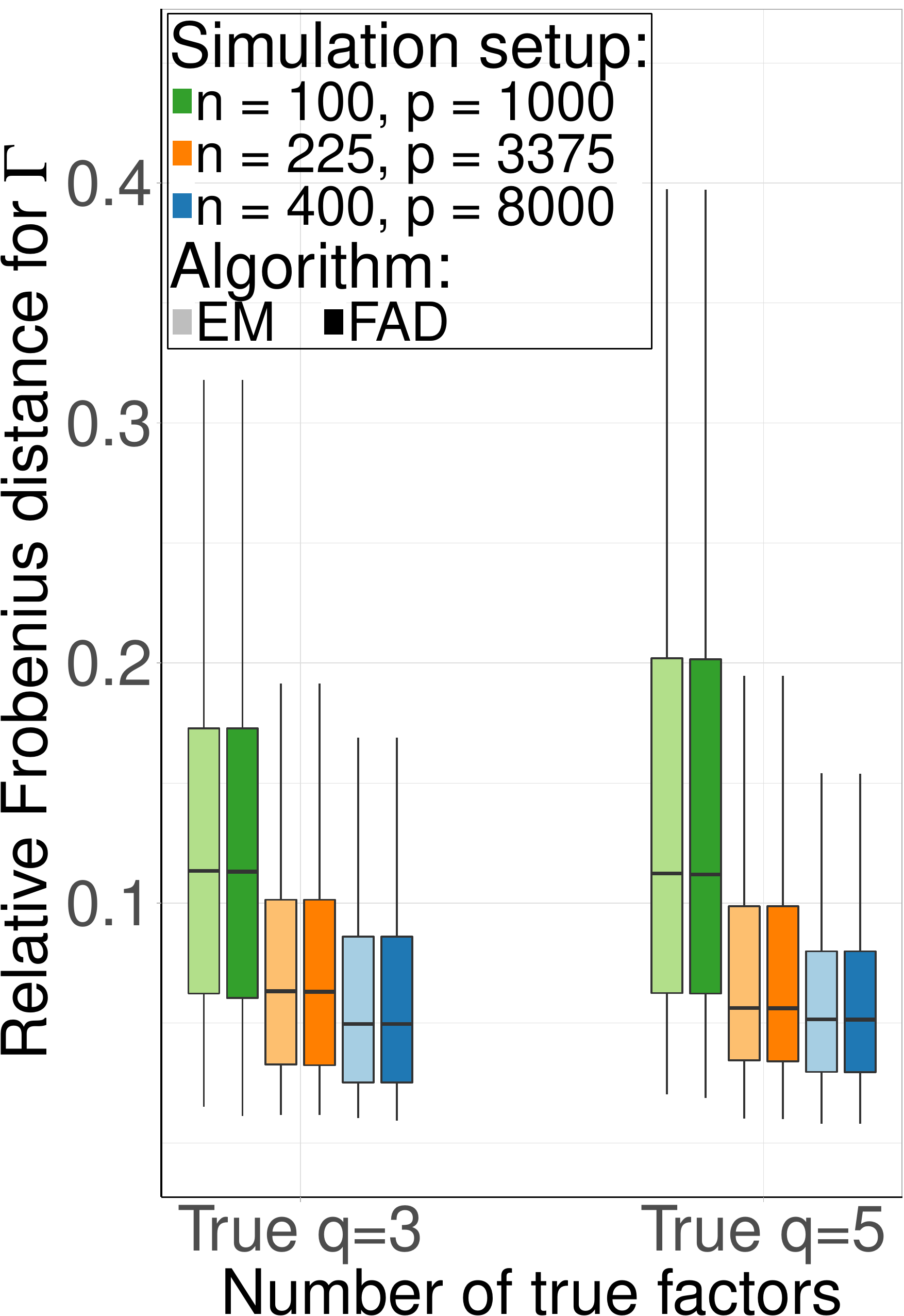}}
    \subfloat[]{\label{fig-fn-LL-hd}\includegraphics[width = 0.31\textwidth]{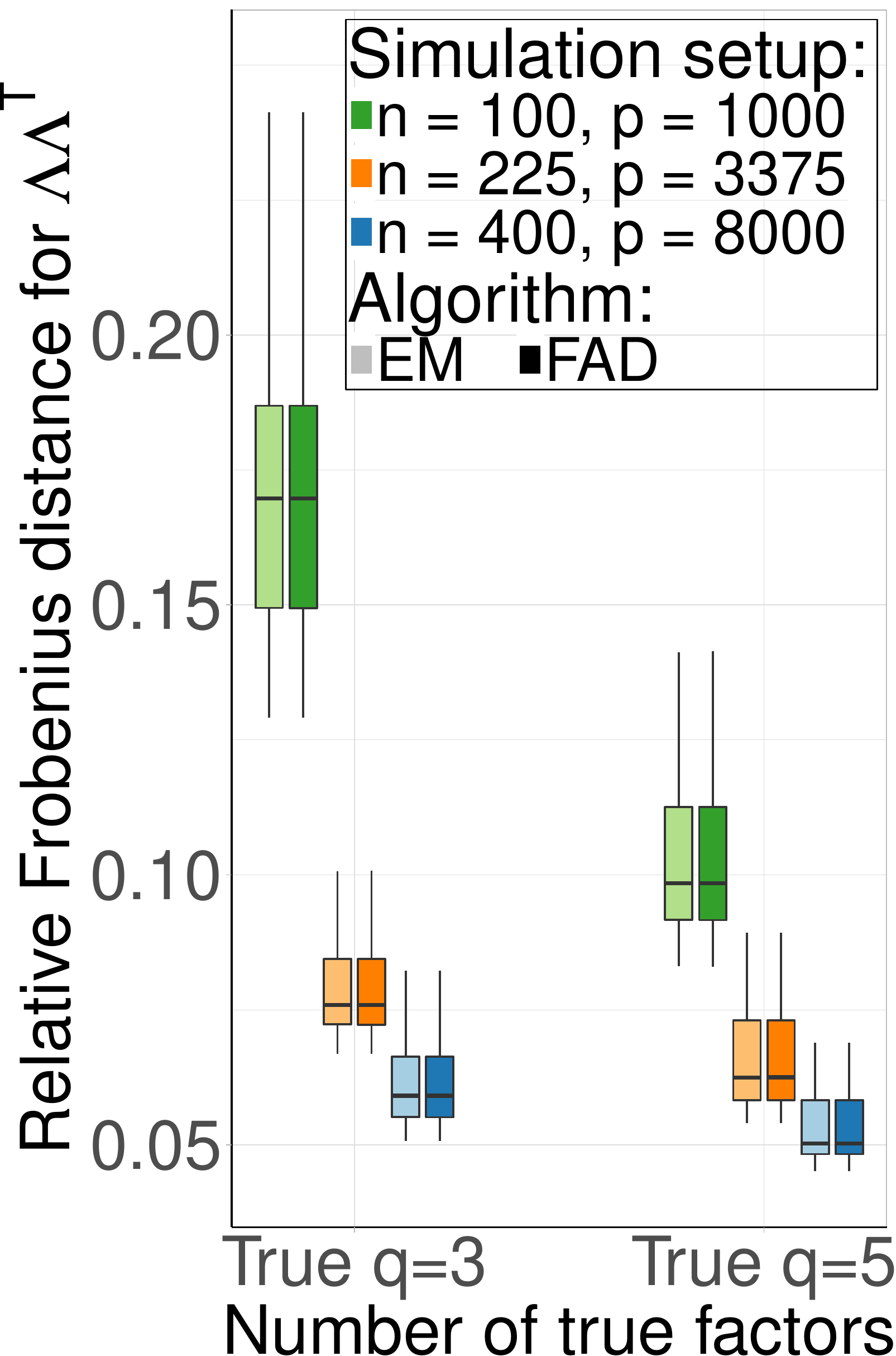}}}
\vspace{-0.15in}
\caption{Relative Frobenius errors of FAD and EM for (a) correlation matrix $\mathbf{R}$, (b) signal matrix $\bm\Gamma$ and (c) $\bm\Lambda\bm\Lambda^\top$ on randomly simulated cases, with lighter ones for EM and darker ones for FAD.}
\label{fig-sim-RG}
\end{figure}

\begin{figure}[H]
  \centering
\mbox{
    \subfloat[]{\label{fig-fn-R-uhd}\includegraphics[width = 0.315\textwidth]{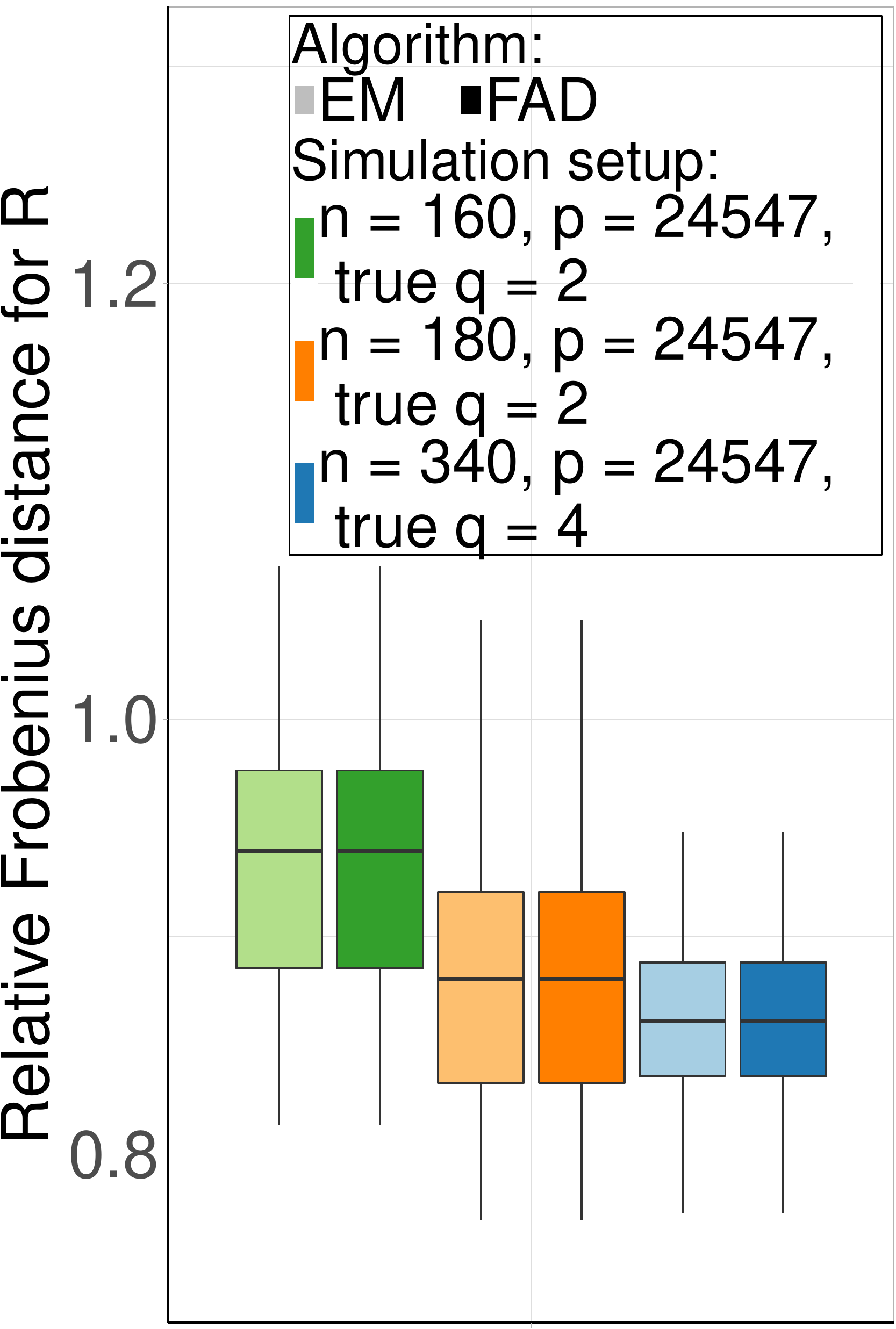}}
  \subfloat[]{\label{fig-fn-G-uhd}\includegraphics[width = 0.31\textwidth]{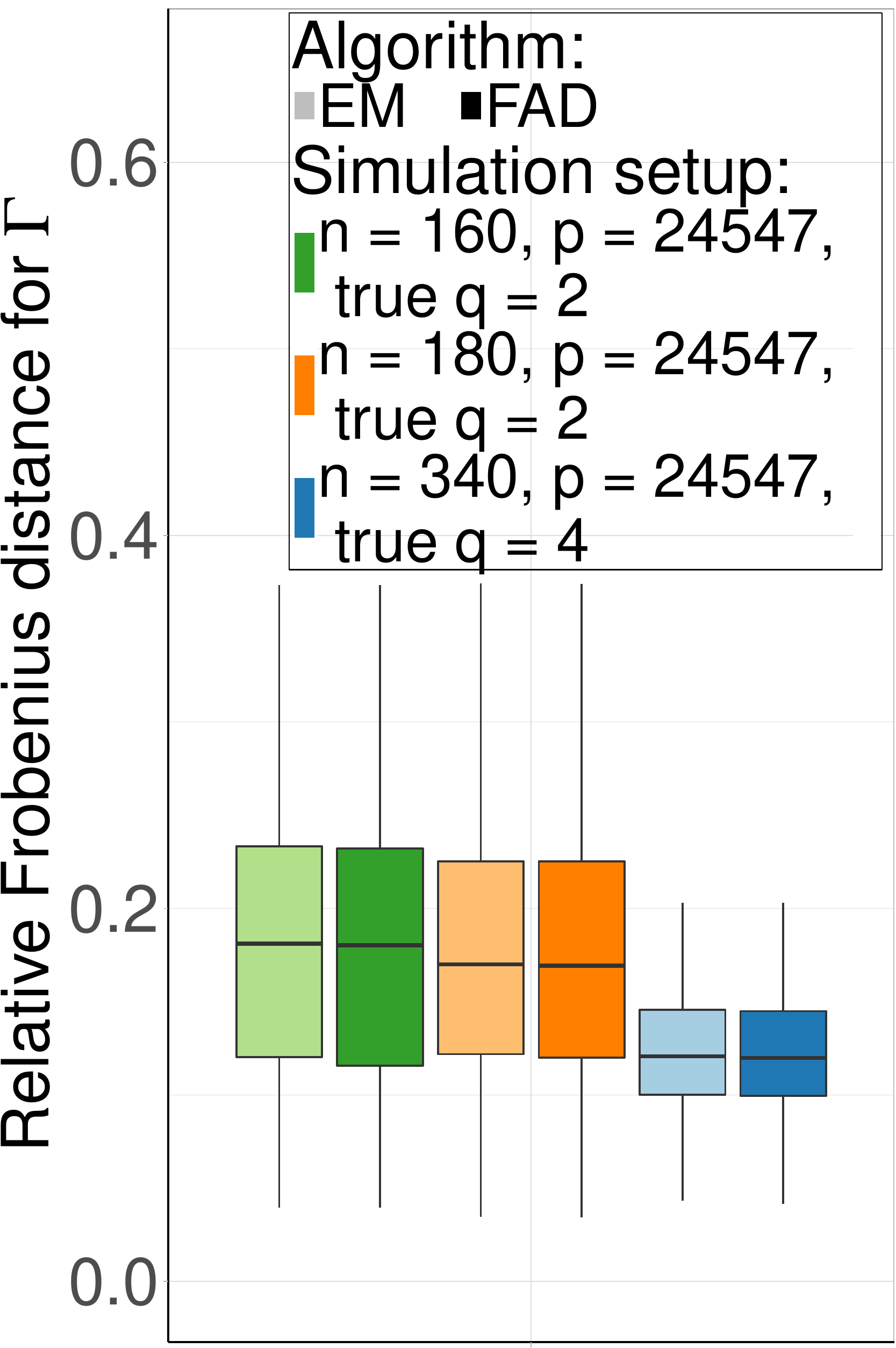}}
    \subfloat[]{\label{fig-fn-LL-hd}\includegraphics[width = 0.31\textwidth]{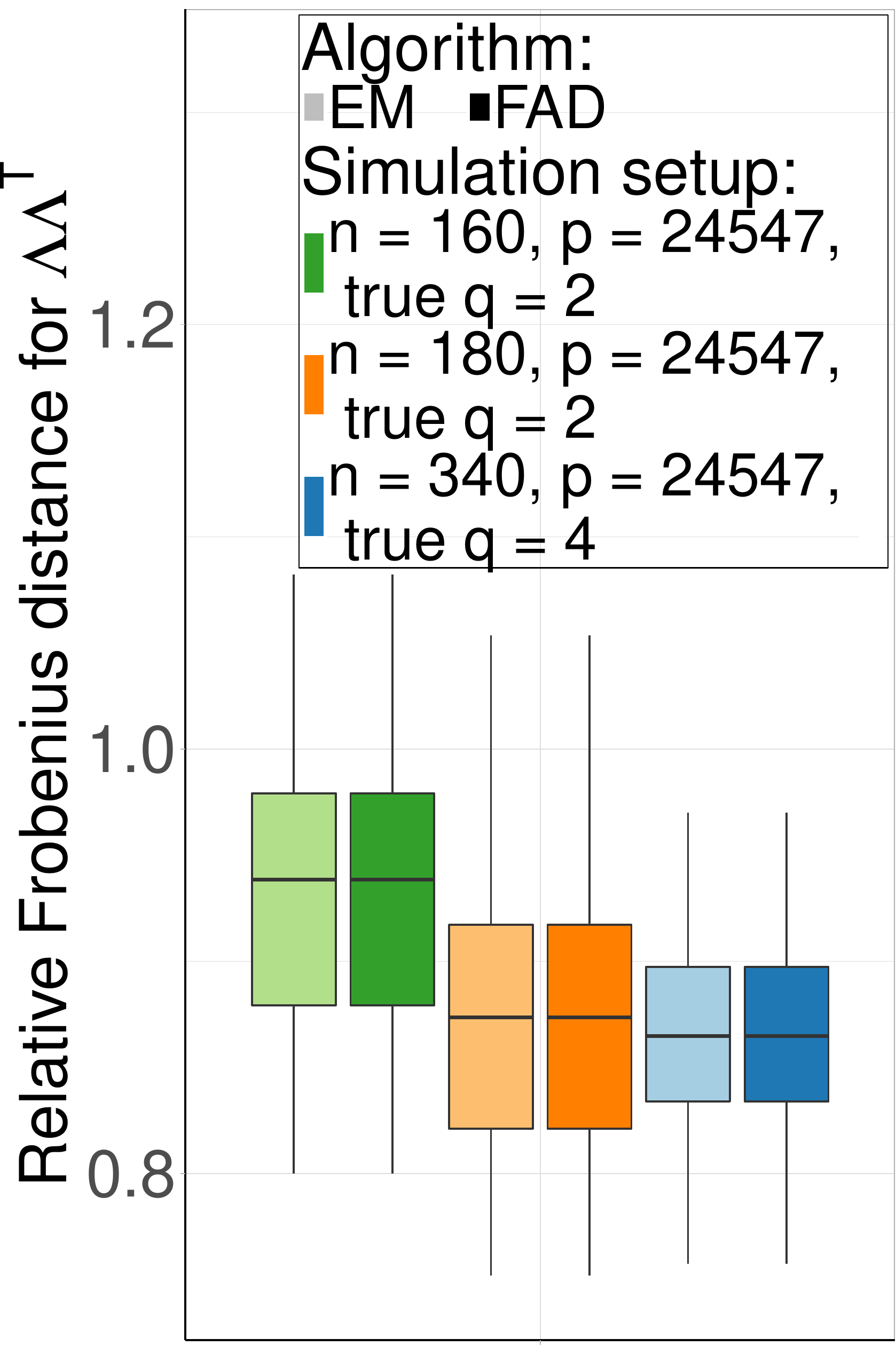}}
  }
  \vspace{-0.15in}
\caption{Relative Frobenius errors of FAD and EM for (a) correlation matrix $\mathbf{R}$, (b) signal matrix $\bm\Gamma$ and (c) $\bm\Lambda\bm\Lambda^\top$ on data-driven cases, with lighter ones for EM and darker ones for FAD.}
\label{fig-sim-RG}
\end{figure}
\vspace{-0.15in}
\subsection{ Performance of FAD compared to EM for high-noise models}
\vspace{-0.15in}
\label{sec:supp-sim-hnoise}
\begin{figure}[H]
\centering
\begin{minipage}{.34\textwidth}
  \subfloat[]{\label{fig-time-hn}\includegraphics[width = 0.99\textwidth]{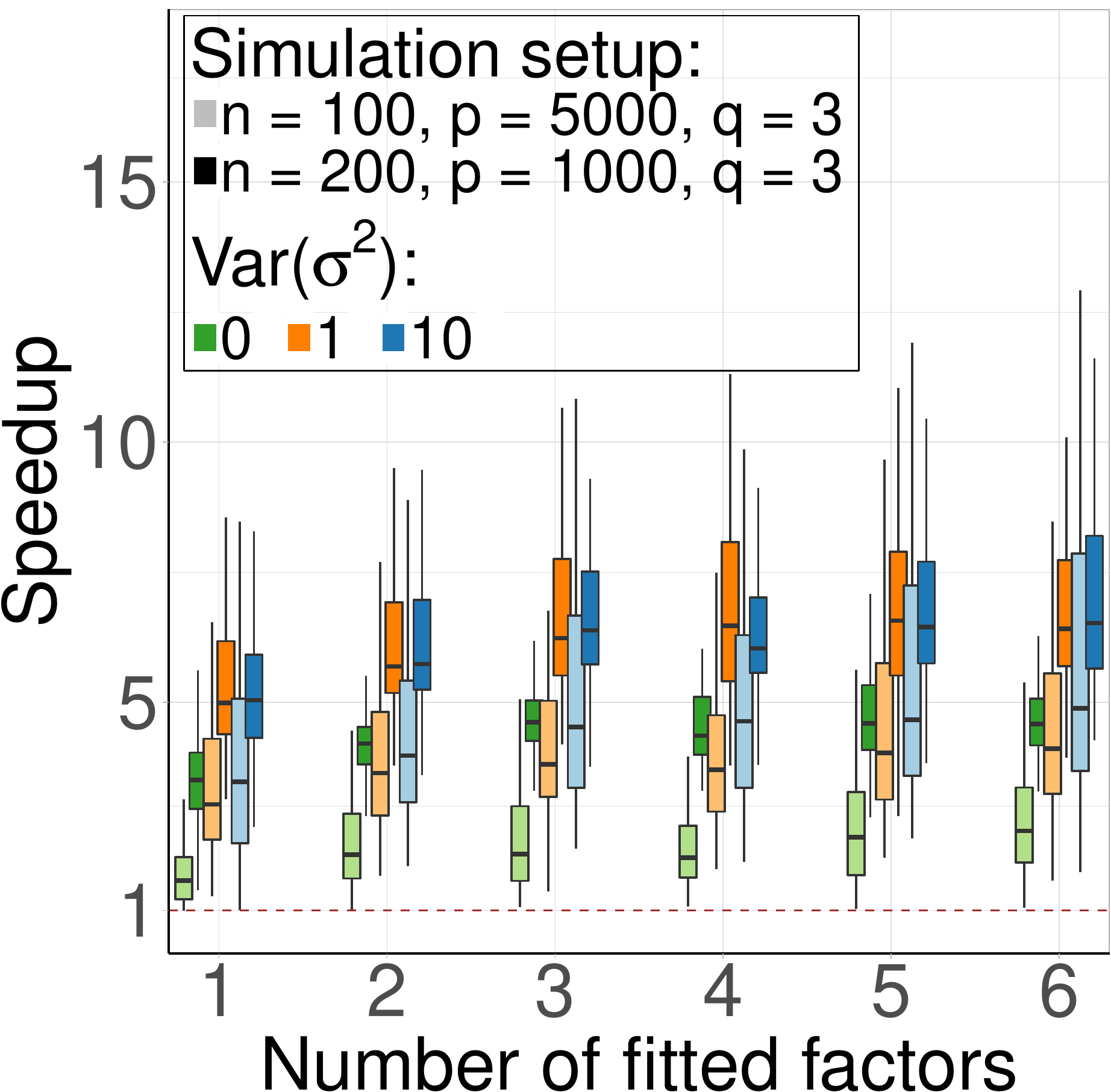}}
  \end{minipage}%
  \begin{minipage}{.227\textwidth}
  \subfloat[]{\label{fig-fn-R-hn}\includegraphics[width = 0.99\textwidth]{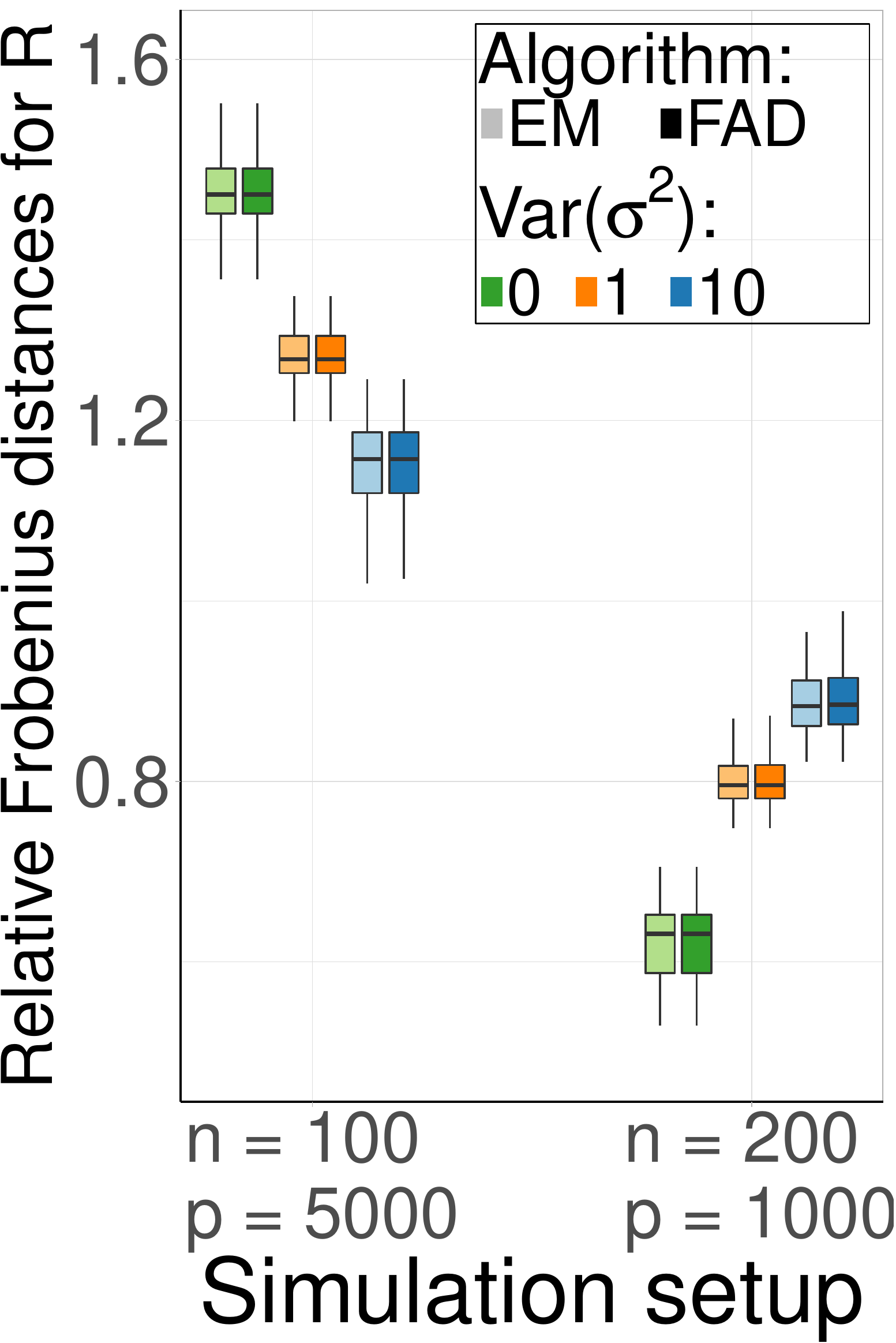}}
    \end{minipage}%
  \begin{minipage}{.215\textwidth}
  \subfloat[]{\label{fig-fn-G-hn}\includegraphics[width = 0.99\textwidth]{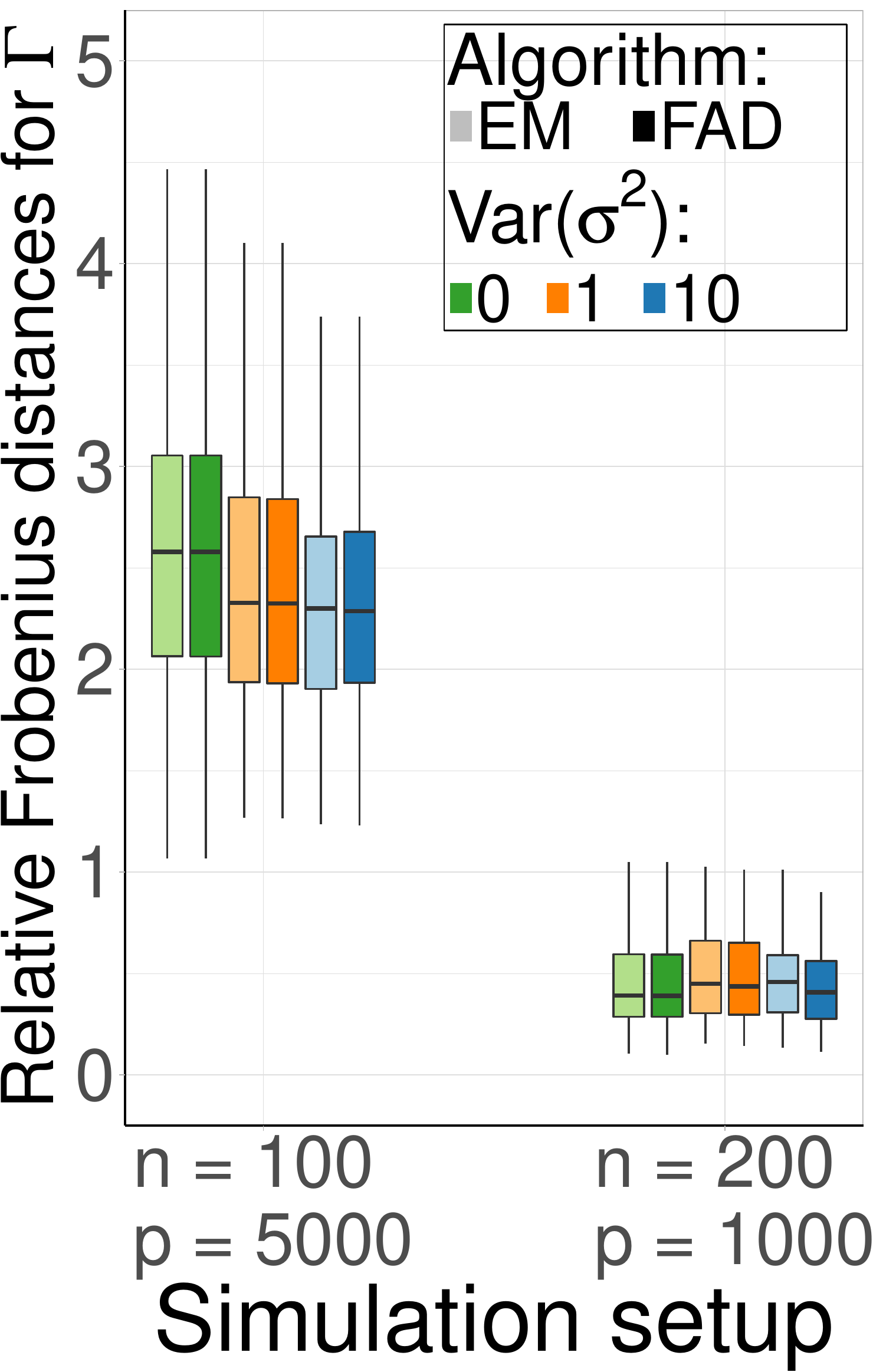}}
\end{minipage}%
  \begin{minipage}{.234\textwidth}
  \subfloat[]{\label{fig-fn-LL-hn}\includegraphics[width = 0.99\textwidth]{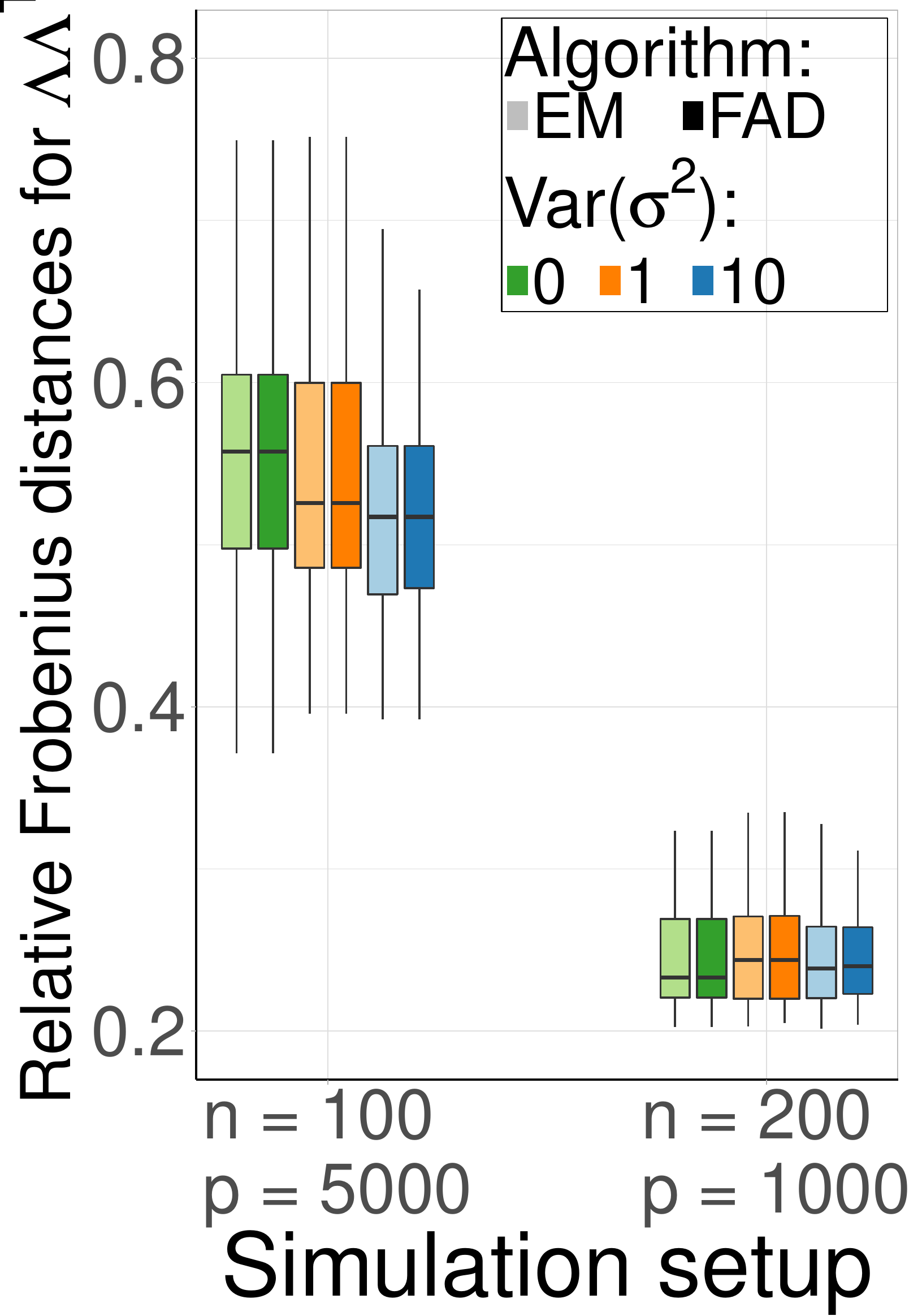}}
\end{minipage}
\vspace{-0.15in}
\caption{Relative speed of FAD to EM on dataset with high noise (a). Relative Frobenius errors of FAD and EM for (b) correlation matrix $\mathbf{R}$, (c) signal matrix $\bm\Gamma$ and (d) $\bm\Lambda\bm\Lambda^\top$.}
\label{fig-sim-hn}
\end{figure}

\bibliographystyle{IEEEtran}
\bibliography{acm}
\end{document}